\documentclass[12pt,a4paper]{amsart}

\usepackage{amsmath,amsfonts,amssymb,mathtools,extarrows}
\usepackage{xcolor}
\usepackage{comment}

\usepackage{color}
\usepackage{graphicx}

\usepackage{enumerate}  

\usepackage{cancel}

\usepackage[hmargin=2cm,vmargin=2cm]{geometry}

\usepackage[hypertexnames=false,hyperfootnotes=false,colorlinks=true,linkcolor=blue,%
citecolor=purple,filecolor=magenta,urlcolor=cyan,unicode,linktocpage=true,pagebackref=false]{hyperref}
\usepackage{nameref,zref-xr}     

\usepackage{tikz}

\hyphenation{CohFTs}
\hyphenation{CohFT  }

\newcommand{\mbP}{\mathbb P}
\newcommand{\mbZ}{\mathbb Z}
\newcommand{\mbC}{\mathbb C}

\newcommand{\cP}{\mathcal P}

\newcommand{\oM}{\overline{\mathcal M}}

\newcommand{\tu}{{\widetilde u}}
\newcommand{\og}{\overline g}
\newcommand{\oh}{\overline h}
\newcommand{\hLambda}{\widehat\Lambda}

\def\cM{{\mathcal{M}}}
\def\oM{{\overline{\mathcal{M}}}}

\renewcommand{\Im}{\mathrm{Im}}
\def\mbQ{{\mathbb Q}}
\def\d{{\partial}}

\renewcommand{\>}{\right>}
\newcommand{\eps}{\varepsilon}

\newcommand{\cA}{\mathcal A}
\newcommand{\hcA}{\widehat{\mathcal A}}
\newcommand{\DR}{\mathrm{DR}}

\newcommand{\ct}{\mathrm{ct}}

\renewcommand{\th}{\widetilde h}

\newcommand{\Coef}{\mathrm{Coef}}

\DeclareMathOperator{\Aut}{Aut}
\newcommand{\Ch}{\mathrm{Ch}}

\newcommand{\tQ}{\widetilde{Q}}

\newcommand{\of}{\overline{f}}

\newcommand{\tK}{\widetilde{K}}

\newcommand{\triv}{\mathrm{triv}}

\newcommand{\tP}{\widetilde{P}}

\newcommand{\tf}{\widetilde{f}}

\newcommand{\od}{\overline{d}}

\newcommand{\mbE}{\mathbb{E}}

\newcommand{\e}{\mathrm{e}}

\newcommand{\lb}{\left(}
\newcommand{\rb}{\right)}

\newcommand{\R}{\mathrm{R}}
\newcommand{\Mi}{\mathrm{Mi}}
\newcommand{\norm}{\mathrm{norm}}
\newcommand{\tT}{\widetilde{T}}
\newcommand{\Ad}{\mathrm{Ad}}
\newcommand{\tlambda}{\widetilde{\lambda}}
\newcommand{\cDO}{\mathcal{DO}}
\newcommand{\hcDO}{\widehat{\mathcal{DO}}}
\newcommand{\tF}{\widetilde{F}}


\newtheorem{theorem}{Theorem}[section]
\newtheorem{proposition}[theorem]{Proposition}
\newtheorem{lemma}[theorem]{Lemma}
\newtheorem{corollary}[theorem]{Corollary}
\newtheorem{conjecture}[theorem]{Conjecture}

\theoremstyle{definition}

\newtheorem{remark}[theorem]{Remark}

\newtheorem{definition}[theorem]{Definition}

\usepackage{color}

\numberwithin{equation}{section}

\pagestyle{myheadings}

\renewcommand{\gg}[2]{\fill[color=white] (#2) circle(2.5mm) node {\color{black}$\substack{#1}$}; \draw (#2) circle (2.5mm)}
\newcommand{\lab}[4]{\draw (#1)++(#2:#3) node {$\substack{#4}$};}
\newcommand{\leg}[2]{\begin{scope}[shift={(#1)}] \draw (0:0) -- (#2:6.1mm);\end{scope}}
\newcommand{\legm}[3]{\begin{scope}[shift={(#1)}] \draw (0:0) -- (#2:7.8mm);\fill[color=white] (#2:7.8mm) circle(1.7mm) node {\color{black}$\substack{#3}$};\end{scope}}

\begin{document}

\title[Deformations of the Riemann hierarchy and the geometry of $\oM_{g,n}$]{Deformations of the Riemann hierarchy and the geometry of $\oM_{g,n}$}

\author{Alexandr Buryak}
\address{A. Buryak:\newline 
Faculty of Mathematics, National Research University Higher School of Economics, Usacheva str. 6, Moscow, 119048, Russian Federation;\smallskip\newline 
Skolkovo Institute of Science and Technology, Bolshoy Boulevard 30, bld. 1, Moscow, 121205, Russian Federation}
\email{aburyak@hse.ru}

\author{Paolo Rossi}
\address{P. Rossi:\newline Dipartimento di Matematica ``Tullio Levi-Civita'', Universit\`a degli Studi di Padova,\newline
Via Trieste 63, 35121 Padova, Italy}
\email{paolo.rossi@math.unipd.it}

\begin{abstract}
The Riemann hierarchy is the simplest example of rank one, ($1$+$1$)-dimensional integrable system of nonlinear evolutionary PDEs. It corresponds to the dispersionless limit of the Korteweg--de Vries hierarchy. In the language of formal variational calculus, we address the classification problem for deformations of the Riemann hierarchy satisfying different extra requirements (general deformations, deformations as systems of conservation laws, Hamiltonian deformations, and tau-symmetric deformations), under the natural group of coordinate transformations preserving each of those requirements. We present several results linking previous conjectures of Dubrovin--Liu--Yang--Zhang (for the tau-symmetric case) and of Arsie--Lorenzoni--Moro (for systems of conservation laws) to the double ramification hierarchy construction of integrable hierarchies from partial CohFTs and F-CohFTs. We prove that, if the conjectures are true, DR hierarchies of rank one are universal objects in the space of deformations of the Riemann hierarchy. We also prove a weaker version of the DLYZ conjecture and that the ALM conjecture implies (the main part of) the DLYZ conjecture. Finally we characterize those rank one F-CohFTs which give rise to Hamiltonian deformations of the Riemann hierarchy.
\end{abstract}

\date{\today}

\maketitle

\tableofcontents

\section{Introduction}

This paper deals with deformations of the Riemann hierarchy, the simplest integrable hierarchy of nonlinear evolutionary PDEs in one function $u=u(x, t_0, t_1, t_2, \ldots)$ of one space and one time variable, namely,
$$
\frac{\d u}{\d t_d}=\frac{u^d}{d!}\frac{\d u}{\d x},\quad d\in\mbZ_{\ge 0}.
$$
The deformations we consider have the form $\frac{\d u}{\d t_d} = Q_d(u,u_x,u_{xx},\ldots)$, with $Q_d$ a power series in $u$ and its $x$-derivatives $u_x = \frac{\d u}{\d x}, u_{xx} = \frac{\d^2 u}{\d x^2}, \ldots$ such that, rescaling $t_d \mapsto t_d/\eps$ and $x\mapsto x/\eps$ the equations become $\frac{\d u}{\d t_d}=\frac{u^d}{d!}u_x + O(\eps)$.

\medskip

The Riemann hierarchy has several interesting properties: it is a system of conservation laws, Hamiltonian and tau-symmetric. This last property means that it is possible to associate to any solution a single function of $x$ and the time variables $t_0,t_1,\ldots$ (the tau-function) encoding the evolution of all of its Hamiltonian densities. In view of their classification, then, one can decide to enforce some, or all, of the mentioned properties on the entire deformation.

\medskip

For instance, in \cite{DLYZ16}, the authors focused on tau-symmetric deformations, relating them to hierarchies of topological type for the Hodge classes on the moduli spaces of stable algebraic curves, i.e. integrable hierarchies whose tau-function for the topological solution with initial datum $u(x, t_0=0, t_1=0, \ldots) = x$ coincides with the generating series of intersection numbers of rank one cohomological field theories (CohFTs) with psi-classes. Their main conjecture is that these Hodge hierarchies cover all possible tau-symmetric deformations of the Riemann hierarchy. In particular they suggest a specific \emph{standard form} for each tau-symmetric deformation, i.e. a family of deformations such that each element in the family is a represenatative of a different equivalence class, under coordinate change, of tau-symmetric deformations and all classes have a representative in the family. This family is an explicit deformation depending on complex parameters and part of the aforementioned DLYZ conjecture identifies which of these parameters are independent, so that they effectively serve as coodinates for the space of inequivalent deformations.

\medskip

A similar classification conjecture, based on explorations with different techniques, can be found in \cite{ALM15b} in the case of general deformations. In particular, for deformations that are conservation laws (i.e. the differential polynomial on the right-hand side of each evolutionary PDE in the hierarchy is a total $x$-derivative) the authors also give a \emph{normal form} for each deformation of conservation law type. In this case functional parameters appear to control the deformation and part of the ALM conjecture identifies which of these functional parameters are independent.

\medskip

In this paper we study the interaction of these conjectures with the family of rank one integrable systems constructed from the intersection theory of the moduli space of stable algebraic curves using the double ramification (DR) hierarchy construction introduced in \cite{Bur15-DR-hierarchy}. Recall that the DR hierarchy associates to a partial cohomological field theory (partial CohFT), i.e. a family of cohomology classes on each moduli space of curves with given genus and number of marked point, compatible with the natural morphisms between such moduli spaces, a tau-symmetric integrable system of Hamiltonian PDEs \cite{BDGR18}. The construction was generalized in \cite{ABLR21} to F-cohomological field theories (F-CohFTs), satisfying weaker axioms and giving rise more generally to integrable systems of conservation laws.

\medskip

The main idea behind this work is that \emph{DR hierarchies for rank one CohFTs or F-CohFTs should be universal objects for the corresponding classes of deformations of the Riemann hierarchy}.

\medskip

Indeed, in Theorem \ref{theorem:DR for rank one partial CohFTs} we prove that, for a certain explicit family of partial CohFTs, the corresponding DR hierarchies are automatically in the DLYZ standard form, and that the family as enough parameters to span all values of what, conjecturally, should be the whole set of independent parameters appearing in the DLYZ conjecture. In other words, if the DLYZ conjecture is true, then Theorem \ref{theorem:DR for rank one partial CohFTs} implies that the DR hierarchies for partial CohFTs give all tau-symmetric deformations of the Riemann hierarchy, directly in the standard form.

\medskip

Similarly, in Theorem \ref{theorem:DR hierarchy for F-CohFT} we prove that, for a certain explicit family of F-CohFTs, the corresponding DR hierarchies are automatically in the ALM normal form, and that the family as enough parameters to span the subset of constant values of what, conjecturally, should be the whole set of independent functional parameters appearing in the ALM conjecture. In other words, if the ALM conjecture is true, then  Theorem \ref{theorem:DR hierarchy for F-CohFT} implies that the DR hierarchies for F-CohFTs give all ALM normal forms of conservation laws deforming the Riemann hierarchy whose functional parameters are constant.

\medskip

In fact before doing that, we also make partial progress in the proof of the DLYZ conjecture itself, by proving in Theorem~\ref{theorem:generalized standard form} a somewhat weaker version of their standard form conjecture. Our \emph{generalized standard form} potentially allows for extra terms compared to the standard form of DLYZ and the DLYZ conjecture is reduced to the automatic vanishing of these extra terms. Moreover, we prove that the ALM conjecture implies (the main part of) the DLYZ conjecture.

\medskip

The structure of the paper is the following. In Section \ref{section:preliminaries} we prove several results, some of them of independent interest, on the natural groups of coordinate transformations (subgroups of the Miura group) preserving the Hamiltonian structure or the tau-symmetry of a system of evolutionary PDEs, and on their interaction/intersection. Moreover we briefly recall the DR hierarchy construction for partial CohFTs and F-CohFTs.

\medskip

In Section \ref{section:tau-symmetric deformations} we recall the DLYZ conjecture and we prove a weaker version of this conjecture, on the generalized standard form of tau-symmetric deformations of the Riemann hierarchy. Moreover, we show that the DR hierarchies of rank one partial CohFTs are in DLYZ standard form and, for an explicit family of partial CohFTs, span the whole set of inequivalent deformations.

\medskip

In Section \ref{section:conservation laws} we recall the ALM conjecture, we show that the DR hierarchies of rank one F-CohFTs are in the ALM normal form and, for an explicit family of F-CohFTs, span the set of inequivalent deformations corresponding to constant functional parameters. We then prove that the ALM conjecture implies (the main part of) the DLYZ conjecture. Finally, among the F-CohFTs in the aforementioned explicit family, we characterize those for which the corresponding DR hierarchy is a Hamiltonian deformation of the Riemann hierarchy.

\medskip

\noindent{\bf Notation and conventions.}
\begin{itemize}
\item For a topological space $X$, we denote by $H^i(X)$ the cohomology groups with coefficients in $\mbC$. Let $H^{\e}(X):=\bigoplus_{i\ge 0}H^{2i}(X)$. For a cohomology class $\omega\in H^*(X)$, we will denote by $[\omega]_d$ the image of $\omega$ under the projection $H^*(X)\to H^{2d}(X)$. 

\smallskip

\item We will work with the moduli spaces $\oM_{g,n}$ of stable algebraic curves of genus $g$ with $n$ marked points, which are nonempty only when the condition $2g-2+n>0$ is satisfied. We will often omit mentioning this condition explicitly, and silently assume that it is satisfied when a moduli space is considered.

\smallskip

\item For $n\in\mbZ_{\ge 0}$, we denote $[n]:=\{1,\ldots,n\}$.

\smallskip

\item Let us fix some notations regarding partitions.
\begin{itemize}
\item We denote by $\cP_n$ the set of partitions of $n\in\mbZ_{\ge 0}$. For $\lambda=(\lambda_1,\ldots,\lambda_l)\in\cP_n$, let 
$$
m_k(\lambda):=|\{1\le i\le l|\lambda_i=k\}|,\qquad (\lambda,1):=(\lambda_1,\ldots,\lambda_l,1)\in\cP_{n+1}.
$$

\item We will consider the following subsets of $\cP_n$:
$$
\cP_n^\circ:=\left\{\lambda\in\cP_n|l(\lambda)\ge 2,\lambda_1=\lambda_2\right\},\qquad
\cP_n':=\left\{\lambda\in\cP_n^\circ|\lambda_i\ge 2\right\}.
$$

\item We will consider the lexicographical order on partitions from $\cP_n$, where $\lambda<\mu$ if and only if $\lambda_1=\mu_1,\ldots,\lambda_k=\mu_k$, and $\lambda_{k+1}<\mu_{k+1}$, for some $k$.

\end{itemize}
\end{itemize}

\medskip

\noindent{\bf Acknowledgements.} The work of A.~B. is an output of a research project implemented as part of the Basic Research Program at the National Research University Higher School of Economics (HSE University). P.~R. is supported by the University of Padova and is affiliated to the INFN under the national project MMNLP and to the INdAM group GNSAGA.

\medskip


\section{Preliminaries}\label{section:preliminaries}

\subsection{Evolutionary PDEs}

We will consider only evolutionary PDEs with one dependent variable. We will use the following notations (see e.g. the reviews~\cite{Ros17,Bur24} for more details):
\begin{itemize}
\item $\cA_u:=\mbC[[u]][u_x,u_{xx},\ldots]$ and $\hcA_u:=\cA_u[[\eps]]$ are the algebras of \emph{differential polynomials}. By $\hcA_{u;d}\subset\hcA_u$, we denote the subspace of differential polynomials of (differential) degree~$d$. For $P\in\hcA_u$, we will denote by $P^{\<d\>}$ the image of $P$ under the projection $\hcA_u\to\bigoplus_{i=0}^d\eps^i\cA_u\subset\hcA_u$.

\smallskip

\item For any $P\in\hcA_u$, define an operator $D_P:=\sum_{n\ge 0}(\d_x^n P)\frac{\d}{\d u_n}$ on $\hcA_u$. Such an operator is called an \emph{evolutionary operator}.

\smallskip

\item $\Lambda_u:=\cA_u/(\Im(\d_x)\oplus\mbC)$ and $\hLambda_u:=\Lambda_u[[\eps]]$ are the vector spaces of \emph{local functionals}. The \emph{variational derivative} $\frac{\delta}{\delta u}:=\sum_{n\ge 0}(-\d_x)^n\circ\frac{\d}{\d u_n}$ is correctly defined on local functionals, $\frac{\delta}{\delta u}\colon\hLambda_u\to\hcA_u$, and a local functional $\oh\in \hLambda_u$ is equal to zero if and only if $\frac{\delta\oh}{\delta u}=0$ (see~e.g. \cite[Lemma~2.1.6]{LZ11}).

\smallskip

\item We denote by $\cDO_u$ the vector space of differential operators $K$ of the form $K=\sum_{i}K_i\d_x^i$, $K_i\in\cA_u$, where the sum is finite. We will use the notation $K^\dagger:=\sum_i (-\d_x)^i\circ K_i$. Denote $\hcDO_u:=\cDO_u[[\eps]]$. For any $P\in\hcA_u$, let us denote $L(P):=\sum_{n\ge 0}\frac{\d P}{\d u_n}\d_x^n\in\hcDO_u$. We will use the fact that $P\in\Im(\frac{\delta}{\delta u})$ if and only if $L(P)=L(P)^\dagger$~\cite{Dor78}.

\smallskip

\item Any differential operator $K\in\hcDO_u$ defines a bracket $\{\cdot,\cdot\}_K$ on the space $\hLambda_u$ by $\{\of,\oh\}_K:=\int\frac{\delta\of}{\delta u}K\frac{\delta\oh}{\delta u}dx$. If this bracket is skewsymmetric and satisfies the Jacobi identity, then the operator $K$ is called \emph{Poisson}.

\smallskip

\item We will consider systems of PDEs
\begin{gather}\label{eq:general system}
\frac{\d u}{\d t_i}=P_i,\quad P_i\in\hcA_{u;1},\quad i\in\mbZ_{\ge 1},
\end{gather}
and we will say that the flows pairwise commute if $[D_{P_i},D_{P_j}]=0$ for all $i,j\ge 1$. One says that the system~\eqref{eq:general system} is
\begin{itemize}
\item a \emph{system of conservation laws} if $P_i\in\Im(\d_x)$;
\item \emph{Hamiltonian} if $P_i=K\frac{\delta\oh_i}{\delta u}$ for some Poisson operator $K\in\hcDO_u$ and local functionals $\oh_i\in\hLambda_u$.
\end{itemize}
\end{itemize}

\medskip

We will often use the following well-known lemma~(see \cite[Lemma~3.3]{LZ06} and \cite[Lemma 2.5]{Bur15}).

\begin{lemma}\label{lemma:uniqueness}
Consider a differential polynomial $P\in\hcA_{u;1}$ of the form $P=uu_x+O(\eps)$ and an integer $d\ge 1$.
\begin{enumerate}
\item If a differential polynomial $Q\in\hcA_{u;1}$ satisfies $[D_P,D_Q]=O(\eps^{d+1})$, then $Q^{\<d\>}$ is uniquely determined by $P^{\<d\>}$ and $Q|_{\eps=0}$.

\smallskip

\item If a local functional $\oh\in\hLambda_{u;0}$ satisfies $D_P(\oh)=O(\eps^{d+1})$, then $\oh^{\<d\>}$ is uniquely determined by $P^{\<d\>}$ and $\oh|_{\eps=0}$.
\end{enumerate}
\end{lemma}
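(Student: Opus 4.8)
The plan is to establish both statements by induction on the order in $\eps$, reducing each to an injectivity property of the leading operator attached to $uu_x$. For part (1), suppose $Q,\tQ\in\hcA_{u;1}$ both satisfy the hypothesis for the same $P$ and agree at $\eps=0$, and set $R:=Q-\tQ$, so that $R=O(\eps)$ and, by bilinearity, $[D_P,D_R]=O(\eps^{d+1})$. Writing $[D_P,D_R]=D_{\{P,R\}}$ with $\{P,R\}:=D_P(R)-D_R(P)$ and using that $P\mapsto D_P$ is injective, the hypothesis becomes $\{P,R\}=O(\eps^{d+1})$. If $R\neq O(\eps^{d+1})$, let $1\le k\le d$ be minimal such that the $\eps^k$-coefficient $R_k$ of $R$ is nonzero; since $P=uu_x+O(\eps)$ and $R_j=0$ for $j<k$, the coefficient of $\eps^k$ in $\{P,R\}$ equals $\{uu_x,R_k\}$ and must vanish. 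As $R\in\hcA_{u;1}$, the polynomial $R_k$ is homogeneous of differential degree $k+1\ge2$. Everything thus reduces to the claim (1'): a differential polynomial $F$ of differential degree $\ge2$ with $\{uu_x,F\}=0$ is zero. Part (2) is identical, applied to $\ovr:=\oh_1-\oh_2$, and reduces to (2'): a local functional of positive differential degree annihilated by $D_{uu_x}$ vanishes.

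To handle (1'), I would exploit $uu_x=u\cdot u_x$. A short computation using $\d_x^n(uu_x)=\sum_{j=0}^n\binom{n}{j}u_ju_{n+1-j}$ separates the $j=0$ terms to give $D_{uu_x}(F)=u\,\d_x F+\mathcal{D}(F)$, where $\mathcal{D}:=\sum_{n\ge1}\left(\sum_{j=1}^{n}\binom{n}{j}u_ju_{n+1-j}\right)\frac{\d}{\d u_n}$; since $D_F(uu_x)=\d_x(uF)$, one obtains $\{uu_x,F\}=\mathcal{D}(F)-u_xF=:\mathcal{K}(F)$. The key structural point is that the coefficients of $\mathcal{K}$ (and of $\mathcal{D}$) involve only $u_1,u_2,\ldots$ and not $u=u_0$, so both operators are $\mbC[[u]]$-linear; injectivity on each homogeneous component $\cA_{u;m}$ therefore follows from injectivity over $\mbC$ on the space $V_m$ of polynomials in $u_1,u_2,\ldots$ of differential degree $m$. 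For (2'), the same rewriting gives $D_{uu_x}(f)=\mathcal{K}(f)+\d_x(uf)$, so $D_{uu_x}(\of)=0$ is equivalent to $\mathcal{K}(f)\in\Im(\d_x)$; equivalently, applying $\frac{\delta}{\delta u}$ and using the standard identity $\frac{\delta}{\delta u}D_{uu_x}(\of)=D_{uu_x}(\xi)+L(uu_x)^\dagger(\xi)$ with $\xi:=\frac{\delta\of}{\delta u}$ and $L(uu_x)^\dagger=-u\,\d_x$, one reduces (2') to injectivity of $\mathcal{D}$ on $V_m$ for $m\ge1$.

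The crux — and the step I expect to be the main obstacle — is that the naive leading term cancels: the coefficient of the top derivative $u_{\ell+1}$ in $\{uu_x,F\}$ vanishes identically, so a plain top-derivative induction is useless. The remedy I propose is to grade monomials by the statistic $\Phi\left(\prod_i u_i^{a_i}\right):=\sum_i a_i\binom{i}{2}$. On a monomial $M=\prod_i u_i^{a_i}$ of differential degree $m$, the terms $j\in\{1,n\}$ of $\mathcal{D}$ produce $u_1u_n$ and so act as multiplication by $u_1$ times the scalar $\sum_{n\ge2}a_n+m$ (and correspondingly $\mathcal{K}$ by $u_1$ times $\sum_{n\ge2}a_n+m-1$), which preserves $\Phi$; while the terms $2\le j\le n-1$ strictly lower $\Phi$, as one checks that $\binom{j}{2}+\binom{n+1-j}{2}<\binom{n}{2}$ in that range. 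For $m\ge1$ the scalar $\sum_{n\ge2}a_n+m$ is positive, and for $m\ge2$ so is $\sum_{n\ge2}a_n+m-1$; since $M\mapsto u_1M$ is injective, the $\Phi$-preserving part is an injective, $\Phi$-homogeneous ``diagonal''.

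It remains to run the triangularity argument. Given a putative nonzero kernel element $F=\sum_M\gamma_M M\in V_m$, set $\Phi^\ast:=\max\{\Phi(M):\gamma_M\neq0\}$ and extract the component of $\mathcal{K}(F)$ (resp.\ $\mathcal{D}(F)$) in $\Phi$-degree $\Phi^\ast$: since the $\Phi$-lowering part contributes only in strictly smaller degrees, this component equals $\sum_{\Phi(M)=\Phi^\ast}\gamma_M\,c_M\,u_1M$ with $c_M\neq0$, a sum of distinct monomials, which cannot vanish. This contradiction forces $F=0$, establishing (1') and (2') and hence the lemma. The only places requiring care are the two off-by-one bookkeeping points — the coincidence of the indices $j=1$ and $j=n$ when $n=1$, which is exactly what produces the correct scalars above, and the degree thresholds $m\ge2$ versus $m\ge1$, which are precisely what distinguishes the symmetry case (where $u_x$ itself lies in the kernel in degree $1$) from the conservation-law case.
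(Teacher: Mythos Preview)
Your argument is correct. The paper does not give a proof but cites \cite{LZ06,Bur15}; the mechanism underlying those references is precisely the one recorded in the paper as Lemma~\ref{lemma:differentiating ulambda}: for a monomial $u_\lambda$ one has
\[
D_{uu_x}(u_\lambda)-\d_x(uu_\lambda)=(|\lambda|+l(\lambda)-m_1(\lambda)-1)\,u_\lambda u_x+\text{(terms of strictly smaller lex order)},
\]
and the diagonal coefficient $|\lambda|+l(\lambda)-m_1(\lambda)-1$ is exactly your scalar $m+\sum_{n\ge 2}a_n-1$. So both proofs are triangularity arguments for the same operator with the same diagonal entries; the only difference is the filtration used to exhibit the triangular structure. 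The cited approach orders monomials by the lexicographic order on partitions of $m+1$, whereas you filter by the integer-valued statistic $\Phi(u_\lambda)=\sum_i\binom{\lambda_i}{2}$. Your choice is a genuine $\mbZ_{\ge 0}$-grading rather than a total order, which makes the inductive step a touch cleaner and lets you treat the functional case~(2') by the same device after passing to $\xi=\frac{\delta\of}{\delta u}$ via the identity $\frac{\delta}{\delta u}D_{uu_x}(\of)=\mathcal{D}(\xi)$; the paper's route would instead quote the corresponding statement for conserved quantities from the same references. Either filtration yields the lemma.
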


\medskip

For a partition $\lambda$, let $u_\lambda:=\prod_{i=1}^{l(\lambda)}u_{\lambda_i}$. We will need several times the following technical statement, which is similar to~\cite[Lemma~3.2]{LZ06}.

\begin{lemma}\label{lemma:differentiating ulambda}
For an arbitrary partition $\lambda$, we have
$$
D_{uu_x}(u_\lambda)=(|\lambda|+l(\lambda)-m_1(\lambda)-1)u_\lambda u_x+\d_x(uu_\lambda)+\sum_{\substack{\mu\in\cP_{|\lambda|+1}\\\mu<(\lambda,1)}}d_{\lambda,\mu}u_\mu,
$$
where $d_{\lambda,\mu}$ are some integer coefficients.
\end{lemma}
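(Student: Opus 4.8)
The plan is to exploit that $D_{uu_x}$ is a derivation of $\hcA_u$ commuting with the total derivative $\d_x$, which collapses the whole computation onto a single one–variable Leibniz expansion. Since $D_{uu_x}(u)=uu_x$ and $[D_{uu_x},\d_x]=0$, for every $k\ge 0$ we have $D_{uu_x}(u_k)=\d_x^k(uu_x)=\tfrac12\d_x^{k+1}(u^2)=\tfrac12\sum_{j=0}^{k+1}\binom{k+1}{j}u_ju_{k+1-j}$. Applying the Leibniz rule to $u_\lambda=\prod_i u_{\lambda_i}$ then gives $D_{uu_x}(u_\lambda)=\sum_{i=1}^{l(\lambda)}\big(\tfrac12\d_x^{\lambda_i+1}(u^2)\big)\prod_{j\ne i}u_{\lambda_j}$, so the task reduces to organizing the resulting monomials. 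Observe already that, because $uu_x\in\mbZ[u][u_x,u_{xx},\ldots]$ and $\d_x$ preserves integer coefficients, every coefficient produced is automatically an integer; this will yield the integrality of the $d_{\lambda,\mu}$ for free.

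First I would isolate the monomials carrying a bare factor $u=u_0$. In $\tfrac12\d_x^{\lambda_i+1}(u^2)$ these are exactly the two extreme terms $j=0$ and $j=\lambda_i+1$, whose sum is $uu_{\lambda_i+1}$; summing over $i$, the totality of the $u_0$–terms equals $u\,\d_x(u_\lambda)=\d_x(uu_\lambda)-u_xu_\lambda$. This accounts for the term $\d_x(uu_\lambda)$ in the statement, together with a correction $-u_xu_\lambda$. What remains is $S:=\sum_i\big(\tfrac12\sum_{j=1}^{\lambda_i}\binom{\lambda_i+1}{j}u_ju_{\lambda_i+1-j}\big)\prod_{k\ne i}u_{\lambda_k}$, a sum of genuine monomials $u_\mu$ with $\mu\in\cP_{|\lambda|+1}$: each such $\mu$ arises from $\lambda$ by replacing one part $\lambda_i$ with two positive parts $a,b$ satisfying $a+b=\lambda_i+1$.

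Next I would pin down the leading monomial and its coefficient. Splitting $c=\lambda_i$ as $a+b=c+1$ with $a,b\ge1$ forces $\max(a,b)\le c$, with equality if and only if $\{a,b\}=\{c,1\}$; this ``trivial'' split restores all parts of $\lambda$ plus one extra $1$, i.e.\ exactly the partition $(\lambda,1)$ and the monomial $u_\lambda u_x=u_{(\lambda,1)}$. Hence $u_\lambda u_x$ is the only monomial of $S$ attached to $(\lambda,1)$, and its coefficient is read off directly: a part $\lambda_i\ge2$ contributes through $j=1$ and $j=\lambda_i$, for a total of $\tfrac12\binom{\lambda_i+1}{1}+\tfrac12\binom{\lambda_i+1}{\lambda_i}=\lambda_i+1$, while a part $\lambda_i=1$ contributes only through $j=1$, giving $\tfrac12\binom21=1$. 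Summing yields $\sum_{\lambda_i\ge2}(\lambda_i+1)+\sum_{\lambda_i=1}1=(|\lambda|-m_1(\lambda))+(l(\lambda)-m_1(\lambda))+m_1(\lambda)=|\lambda|+l(\lambda)-m_1(\lambda)$; combining with the correction $-u_xu_\lambda$ produces the announced coefficient $|\lambda|+l(\lambda)-m_1(\lambda)-1$, and all remaining monomials of $S$ are collected into $\sum_{\mu<(\lambda,1)}d_{\lambda,\mu}u_\mu$.

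The one genuinely combinatorial point -- the step I expect to be the main obstacle -- is verifying that every nontrivial split yields a partition $\mu$ strictly below $(\lambda,1)$ in the lexicographic order, so that the error sum indeed runs only over $\mu<(\lambda,1)$. The mechanism is transparent: a nontrivial split of $c=\lambda_i$ replaces $c$ by two parts both $<c$ and leaves the other parts of $\lambda$ untouched, whereas $(\lambda,1)$ keeps a part equal to $c$ and merely adjoins a $1$. To make this precise I would compare the two decreasing sequences through their counts $N(t)$ of parts $\ge t$: for $t>c$ the counts coincide, since the parts exceeding $c$ are the same in both, while $N(c)$ is exactly one larger for $(\lambda,1)$ than for $\mu$. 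Consequently the first index at which the sequences differ carries the value $c$ in $(\lambda,1)$ and a value $<c$ in $\mu$, which is precisely the statement $\mu<(\lambda,1)$. Once this order claim is in place the proof is complete, the integrality of all $d_{\lambda,\mu}$ being automatic as noted above.
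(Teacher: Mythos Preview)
Your proof is correct and is precisely the ``straightforward computation'' that the paper alludes to without writing out; there is nothing to compare, as the paper gives no details.
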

\begin{proof}
Straightforward computation.
\end{proof}

\medskip

Recall the following statement (see e.g. Lemma~8.2 from~\cite{BDGR20} and its proof).

\begin{lemma}\label{lemma:unique density}{\ }
\begin{enumerate}
\item For any local functional $\oh\in\hLambda_{u;0}$, there exists a unique density $h\in\hcA_{u;0}$, $\oh=\int h dx$, such that
$$
h=f+\sum_{k\ge 2}\eps^k\sum_{\lambda\in\cP_k^\circ}f_\lambda u_\lambda,\quad f,f_\lambda\in\mbC[[u]],\quad f(0)=0.
$$

\smallskip

\item For any $n\ge 2$ and $\lambda\in\cP_n$, there exists a unique collection of rational constants $e_{\lambda,\mu}$, $\mu\in\cP_n^\circ$, $\mu\le\lambda$, such that
$$
\int u_\lambda dx=\int\lb\sum_{\mu\in\cP_n^\circ,\,\mu\le\lambda}e_{\lambda,\mu}u_\mu\rb dx.
$$
\end{enumerate}
\end{lemma}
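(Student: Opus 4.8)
The plan is to treat the two parts separately, since they address different but related uniqueness questions about representing densities and local functionals modulo total $x$-derivatives.

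For part (1), I would use the elementary fact that integration by parts lets me trade any occurrence of a derivative $u_n$ with $n\ge 1$ against lower-order data modulo $\Im(\d_x)$. The strategy is to proceed order by order in $\eps$. At $\eps^0$ the claim is that any density can be normalized to a function $f\in\mbC[[u]]$ with $f(0)=0$; this follows because any monomial $u_\lambda$ with $l(\lambda)\ge 1$ and some $\lambda_i\ge 1$ is, modulo $\Im(\d_x)$ and lower-degree terms, reducible, so only the pure $u$-dependent part survives, and the additive constant is killed by the quotient by $\mbC$. At each higher order $\eps^k$, I would similarly integrate by parts to eliminate any monomial that is not of the prescribed form $u_\lambda$ with $\lambda\in\cP_k^\circ$ (i.e. $l(\lambda)\ge 2$ and $\lambda_1=\lambda_2$), pushing the coefficient functions $f_\lambda$ into $\mbC[[u]]$. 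Existence of such a representative is then a matter of running this reduction; the subtle point is uniqueness. For uniqueness I would invoke the criterion recalled in the preliminaries that a local functional vanishes if and only if its variational derivative vanishes. Concretely, if two densities of the prescribed form differ by a total derivative, their difference lies in $\Im(\d_x)$, and I would show that the only element of the prescribed normal form lying in $\Im(\d_x)$ is zero by computing $\frac{\delta}{\delta u}$ of the difference and checking that the normal-form monomials contribute linearly independent terms to the variational derivative.

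For part (2), the claim is a linear-algebra statement at fixed weight $n$: the finitely many classes $\int u_\lambda\,dx$ for $\lambda\in\cP_n$ are all expressible as rational-coefficient combinations of the distinguished classes $\int u_\mu\,dx$ with $\mu\in\cP_n^\circ$, and moreover one can arrange the triangularity condition $\mu\le\lambda$ in the lexicographical order. Here the natural tool is Lemma~\ref{lemma:differentiating ulambda}, which computes $D_{uu_x}(u_\lambda)$ and exhibits, modulo $\Im(\d_x)$, a relation expressing $u_\lambda u_x$ in terms of strictly lexicographically smaller monomials $u_\mu$ with $\mu<(\lambda,1)$. I would feed this relation into an induction on the lexicographical order: given any $\lambda\in\cP_n$, if $\lambda\notin\cP_n^\circ$ (so $l(\lambda)=1$, or $\lambda_1>\lambda_2$) I would use the relation to rewrite $\int u_\lambda\,dx$ in terms of integrals of lexicographically smaller partitions, and then apply the inductive hypothesis; the base case and the bookkeeping of which partitions land in $\cP_n^\circ$ must be checked. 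The triangularity $\mu\le\lambda$ follows by keeping track of the lexicographical decrease at each reduction step.

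The main obstacle I anticipate is the uniqueness assertion in part (1), specifically proving that the prescribed normal-form monomials are genuinely independent modulo $\Im(\d_x)$ at each $\eps$-order. This amounts to showing that no nontrivial combination $\sum_{\lambda\in\cP_k^\circ}f_\lambda u_\lambda$ with $f_\lambda\in\mbC[[u]]$, together with the $\eps^0$ part $f$, can be a total $x$-derivative unless all coefficients vanish. The condition $\lambda_1=\lambda_2$ defining $\cP_n^\circ$ is precisely what rules out the ambiguity coming from integration by parts on the top derivative, so I would need to verify carefully, via the variational derivative, that this restriction exactly pins down a basis of the relevant quotient; the combinatorics of how $\frac{\delta}{\delta u}$ acts on the monomials $u_\lambda$ is where the real work lies, whereas the existence half of each part is a routine reduction.
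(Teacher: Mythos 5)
Your plan for Part (1) --- integration by parts on the highest derivative for existence, and injectivity of $\frac{\delta}{\delta u}$ on normal forms for uniqueness --- is the standard route and matches the argument the paper points to (the paper does not prove this lemma itself; it recalls it from \cite[Lemma~8.2]{BDGR20}). Be aware, however, that you have deferred exactly the step that carries all the content: showing that no nonzero density of the form $f+\sum_{k\ge 2}\eps^k\sum_{\lambda\in\cP_k^\circ}f_\lambda u_\lambda$ lies in $\Im(\d_x)\oplus\mbC$. A naive argument of the type ``$\Im(\d_x)$ contains no monomials with $\lambda_1=\lambda_2$'' fails, since $\d_x$ applied to a monomial $u_{\mu}$ with $\mu_1=\mu_2+1$ does produce monomials with two equal top parts; so one genuinely has to compute $\frac{\delta}{\delta u}(f_\lambda u_\lambda)$, isolate the coefficient of the top derivative $u_{2\lambda_1}u_{\lambda_3}\cdots u_{\lambda_{l(\lambda)}}$, and check independence as $\lambda$ ranges over $\cP_k^\circ$. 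As written, the crux is identified but not carried out.

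For Part (2) there is a genuine gap: Lemma~\ref{lemma:differentiating ulambda} is not the right tool, and your reading of it is incorrect. That lemma is an identity of differential polynomials in which $D_{uu_x}(u_\lambda)$ is a bona fide differential polynomial of degree $|\lambda|+1$ that is neither a total $x$-derivative nor a combination of lexicographically smaller monomials, so rearranging the identity does \emph{not} express $u_\lambda u_x$ modulo $\Im(\d_x)$ in terms of smaller $u_\mu$'s; its actual role in the paper is to track how the Miura transformations $\Phi_{\oh,\d_x}$ act on the coefficients $a_\lambda$ in the proof of Theorem~\ref{theorem:generalized standard form}. Moreover it only concerns monomials of the form $u_\lambda u_x$, i.e.\ partitions with a part equal to $1$, whereas Part (2) must handle arbitrary $\lambda\in\cP_n$. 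The correct argument is the constant-coefficient version of Part (1): for $l(\lambda)=1$ one has $\int u_n\,dx=0$; for $\lambda_1>\lambda_2$ one integrates by parts on $u_{\lambda_1}$ to get $\int u_\lambda\,dx=-\sum_{i\ge 2}\int u_{\lambda_1-1}\,u_{\lambda_i+1}\prod_{j\ge 2,\,j\ne i}u_{\lambda_j}\,dx$, where every partition produced is lexicographically $\le\lambda$ and the terms with $\lambda_i=\lambda_1-1$ reproduce $u_\lambda$ itself, with a multiplicity one then solves for --- this is precisely where the constants $e_{\lambda,\mu}$ become rational rather than integral. Induction on the lexicographic order finishes existence, and uniqueness follows from the independence statement underlying Part (1).
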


\medskip

The lemma immediately implies the following corollary.

\begin{corollary}\label{corollary:simple}
Consider a local functional $\oh\in\hLambda_{u;0}$.
\begin{enumerate}
\item If $\frac{\d\oh}{\d u}=0$, then $\oh=\alpha\int u dx+\int h dx$ for some $\alpha\in\mbC$ and $h\in\hcA_{u;0}$ satisfying $\frac{\d h}{\d u}=0$. 

\smallskip

\item If $\frac{\delta\oh}{\delta u}\in\Im(\d_x)$, then $\oh=\int h dx$ for some $h\in\hcA_{u;0}$ satisfying $\frac{\d h}{\d u}=0$.
\end{enumerate}
\end{corollary}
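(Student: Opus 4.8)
The plan is to reduce both statements to the uniqueness of the canonical density provided by Lemma~\ref{lemma:unique density}(1). First I fix, for the given $\oh\in\hLambda_{u;0}$, its canonical density
$$
h=f+\sum_{k\ge 2}\eps^k\sum_{\lambda\in\cP_k^\circ}f_\lambda u_\lambda,\qquad f,f_\lambda\in\mbC[[u]],\quad f(0)=0,
$$
and record two elementary facts. Since $[\frac{\d}{\d u},\d_x]=0$ and $\frac{\d}{\d u}$ annihilates constants, the operator $\frac{\d}{\d u}=\frac{\d}{\d u_0}$ descends to a well-defined map $\hLambda_u\to\hLambda_u$, so $\frac{\d\oh}{\d u}$ is meaningful as a local functional. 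Moreover, because every $u_\lambda$ with $\lambda\in\cP_k^\circ$ is a monomial in the positive-order jets $u_x,u_{xx},\ldots$ only, differentiation in $u=u_0$ acts solely on the coefficients, giving $\frac{\d h}{\d u}=f'(u)+\sum_{k\ge 2}\eps^k\sum_{\lambda\in\cP_k^\circ}f_\lambda'(u)u_\lambda$.

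For part (1), I observe that $\frac{\d h}{\d u}$, after subtracting the constant $f'(0)$ (which represents the zero local functional), is again in the canonical form of Lemma~\ref{lemma:unique density}(1), with leading coefficient $f'(u)-f'(0)$ and higher coefficients $f_\lambda'(u)$. Hence the hypothesis $\frac{\d\oh}{\d u}=0$ together with the uniqueness statement forces $f'(u)=f'(0)$ and $f_\lambda'(u)=0$ for all $\lambda$; that is, $f(u)=\alpha u$ with $\alpha:=f'(0)$ (using $f(0)=0$) and each $f_\lambda=c_\lambda$ a constant. Writing $\th:=\sum_{k\ge 2}\eps^k\sum_{\lambda\in\cP_k^\circ}c_\lambda u_\lambda$, which by construction does not depend on $u_0$, I obtain $\oh=\alpha\int u\,dx+\int\th\,dx$ with $\frac{\d\th}{\d u}=0$, as claimed.

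For part (2), I would first note the identity $\frac{\d\oh}{\d u}=\int\frac{\delta\oh}{\delta u}\,dx$: in $\int\frac{\delta\oh}{\delta u}\,dx=\int\sum_{n\ge 0}(-\d_x)^n\frac{\d h}{\d u_n}\,dx$ only the $n=0$ term survives the integration. Therefore the hypothesis $\frac{\delta\oh}{\delta u}\in\Im(\d_x)$ gives $\frac{\d\oh}{\d u}=0$, and part (1) applies, yielding $\oh=\alpha\int u\,dx+\int\th\,dx$ with $\frac{\d\th}{\d u}=0$. It remains to show $\alpha=0$. Applying $\frac{\delta}{\delta u}$ and using $\frac{\delta}{\delta u}\int u\,dx=1$ gives $\frac{\delta\oh}{\delta u}=\alpha+\frac{\delta\th}{\delta u}=\d_x g$ for some $g\in\hcA_u$. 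I then evaluate under the substitution setting all positive-order jets $u_x,u_{xx},\ldots$ to zero: the right-hand side $\d_x g=\sum_k u_{k+1}\frac{\d g}{\d u_k}$ vanishes because every term carries a factor $u_{k+1}$ with $k+1\ge 1$, and $\frac{\delta\th}{\delta u}$ also vanishes under this substitution because $\th$ is at least quadratic in the positive-order jets (as $l(\lambda)\ge 2$), so its variational derivative is at least linear in them. What survives is precisely $\alpha$, whence $\alpha=0$ and $\oh=\int\th\,dx$ with $\frac{\d\th}{\d u}=0$.

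The computations are all routine; the only points requiring a little care are the two preliminary observations of the first paragraph — that $\frac{\d}{\d u}$ descends to local functionals and that it preserves the canonical form — and, in part (2), the jet-evaluation argument isolating the constant $\alpha$. I expect no genuine obstacle, which is consistent with the claim that the corollary follows immediately from Lemma~\ref{lemma:unique density}.
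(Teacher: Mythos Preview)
Your proof is correct and follows exactly the route the paper intends: the paper gives no proof beyond ``the lemma immediately implies the following corollary,'' and your reduction to the canonical density of Lemma~\ref{lemma:unique density}(1), together with the jet-evaluation trick to kill $\alpha$ in part~(2), is precisely the natural way to unpack that immediacy. All the verifications you flag as needing care (that $\frac{\d}{\d u}$ descends to $\hLambda_u$, that it preserves the canonical shape, and that $\frac{\delta\th}{\delta u}$ is at least linear in positive-order jets) are correct.
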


\medskip

\begin{definition}{\ }
\begin{enumerate}
\item By a \emph{Miura transformation}, we mean a change of variables of the form $u\mapsto\tu(u_*,\eps)=u+\eps f$, $f\in\hcA_{u;1}$. We denote the group of Miura transformations by~$\Mi$. Let us denote~by 
$$
\Mi^{(k)}\subset\Mi,\quad k\ge 1,
$$
the subgroup consisting of Miura transformations of the form $u\mapsto\tu(u_*,\eps)=u+\eps^k f$, $f\in\hcA_{u;k}$.

\smallskip

\item An \emph{elementary} Miura transformation is a Miura transformation of the form $u\mapsto\tu(u_*,\eps)=u+\eps^k f$, where $k\ge 1$ and $f\in\cA_{u;k}$. Let us denote this elementary Miura transformation by $\Phi_f$. The group $\Mi$ is generated by elementary Miura transformations, i.e. any Miura transformation can be expressed as the composition $\cdots\circ\Phi_{f_3}\circ\Phi_{f_2}\circ\Phi_{f_1}$ for some $f_k\in\cA_{u;k}$.

\smallskip

\item In the group of Miura transformations, consider the subgroup that preserves the operator~$\d_x$. Let us denote it by 
$$
\Mi_{\d_x}\subset\Mi.
$$
We will also use the notation $\Mi_{\d_x}^{(k)}:=\Mi_{\d_x}\cap\Mi^{(k)}$.

\smallskip

\item A \emph{normal Miura transformation} is a Miura transformation of the form:
\begin{gather}\label{eq:normal Miura}
u\mapsto\tu(u_*,\eps)=u+\d_x^2P,\quad P\in\hcA_{u;-2}.
\end{gather}
Denote the group of normal Miura transformations by $\Mi_\norm$. 
\end{enumerate}
\end{definition}

\medskip

Let us recall the following facts about the group $\Mi_{\d_x}$ (see e.g.~\cite[Theorem~2.3]{Dub10}):
\begin{itemize}
\item The group $\Mi_{\d_x}$ is generated by Miura transformations of the form
\begin{gather}\label{eq:Miura-Phihdx}
u\mapsto\tu(u_*,\eps)=u+\eps^k\{u,\oh\}_{\d_x}+\frac{\eps^{2k}}{2!}\{\{u,\oh\}_{\d_x},\oh\}_{\d_x}+\frac{\eps^{3k}}{3!}\{\{\{u,\oh\}_{\d_x},\oh\}_{\d_x},\oh\}_{\d_x}+\ldots,
\end{gather}
where $k\ge 1$ and $\oh\in\Lambda_{u;k-1}$. Let us denote this Miura transformation by $\Phi_{\oh,\d_x}$. 

\smallskip

\item For any Miura transformation of the form~\eqref{eq:Miura-Phihdx} and $\of\in\hLambda_u$, we have
$$
\of[\tu]=\left.\lb\of+\sum_{i\ge 1}\eps^{ki}\Ad_{\oh,\d_x}^i(\of)\rb\right|_{u_j\mapsto\tu_j},
$$
where $\Ad_{\oh,\d_x}(\of):=\{\oh,\of\}_{\d_x}$.

\smallskip

\item The previous fact immediately implies that for any Miura transformation $u\mapsto\tu(u_*,\eps)$ from the group $\Mi_{\d_x}$ we have $\lb\int\frac{u^2}{2}dx\rb[\tu]=\int\frac{\tu^2}{2}dx$.

\end{itemize}

\medskip

\begin{lemma}
The group $\Mi_{\d_x}\cap\Mi_\norm$ is generated by the Miura transformations $\Phi_{\oh,\d_x}$, $\oh\in\Lambda_{u;k-1}$, $k\ge 2$, with $\frac{\d\oh}{\d u}=0$.
\end{lemma}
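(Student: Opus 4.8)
The plan is to prove the two inclusions implicit in a generation statement: first that every transformation $\Phi_{\oh,\d_x}$ with $\oh\in\Lambda_{u;k-1}$, $k\ge 2$, and $\frac{\d\oh}{\d u}=0$ does lie in $\Mi_{\d_x}\cap\Mi_\norm$, and then that every element of $\Mi_{\d_x}\cap\Mi_\norm$ is a (possibly infinite, $\eps$-adically convergent) composition of such transformations. Throughout I would use that $\Mi_\norm$ is a group, so that $\Mi_{\d_x}\cap\Mi_\norm$ is a group and inverses of allowed generators are allowed generators (indeed $\Phi_{\oh,\d_x}^{-1}=\Phi_{-\oh,\d_x}$).

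For the first inclusion, membership in $\Mi_{\d_x}$ is immediate, as these are among the recalled generators of $\Mi_{\d_x}$. To check normality I would first note that $\frac{\d\oh}{\d u}=0$ forces $\frac{\delta\oh}{\delta u}\in\Im(\d_x)$: for the canonical density $h$ of Lemma~\ref{lemma:unique density} one has $\frac{\delta\oh}{\delta u}=\frac{\d h}{\d u}+\sum_{n\ge 1}(-\d_x)^n\frac{\d h}{\d u_n}$, and every term with $n\ge 1$ is a total $x$-derivative. Writing $\frac{\delta\oh}{\delta u}=\d_x R$, the leading term of $\Phi_{\oh,\d_x}$ is $\eps^k\{u,\oh\}_{\d_x}=\eps^k\d_x^2 R$, which is normal. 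Setting $\mathcal D:=D_{\{u,\oh\}_{\d_x}}$, the $j$-fold iterated bracket in~\eqref{eq:Miura-Phihdx} equals $\mathcal D^j(u)$, and since evolutionary operators commute with $\d_x$ one gets $\mathcal D^j(u)=\mathcal D^{j-1}(\d_x^2 R)=\d_x^2\,\mathcal D^{j-1}(R)$ for all $j\ge 1$. A degree count (with $R\in\cA_{u;k-2}$ and $\mathcal D$ raising degree by $k$) shows $\eps^{kj}\mathcal D^{j-1}(R)\in\hcA_{u;-2}$, whence $\tu-u=\d_x^2\big(\sum_{j\ge 1}\tfrac{\eps^{kj}}{j!}\mathcal D^{j-1}(R)\big)$ has the form~\eqref{eq:normal Miura}.

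For the second inclusion I would induct on the leading order. Since $\Mi_\norm\subset\Mi^{(2)}$, any nontrivial $\Phi\in\Mi_{\d_x}\cap\Mi_\norm$ has leading term at some order $k\ge 2$, say $\tu=u+\eps^k f_k+O(\eps^{k+1})$ with $f_k\in\cA_{u;k}$. The key structural point is that the leading term of an element of $\Mi_{\d_x}^{(k)}$ is a Hamiltonian vector field for $\d_x$, i.e.\ $f_k=\d_x\frac{\delta\oh_k}{\delta u}$ for some $\oh_k\in\Lambda_{u;k-1}$. On the other hand normality gives $f_k\in\Im(\d_x^2)$, so $\d_x\frac{\delta\oh_k}{\delta u}\in\Im(\d_x^2)$; since $\frac{\delta\oh_k}{\delta u}$ is homogeneous of positive degree $k-1$, the kernel of $\d_x$ contributes nothing and $\frac{\delta\oh_k}{\delta u}\in\Im(\d_x)$, hence $\frac{\d\oh_k}{\d u}=0$ by Corollary~\ref{corollary:simple}(2). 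Thus $\Phi_{\oh_k,\d_x}$ is an allowed generator and, by the first part, lies in $\Mi_{\d_x}\cap\Mi_\norm$; therefore $\Phi_{-\oh_k,\d_x}\circ\Phi\in\Mi_{\d_x}\cap\Mi_\norm$ has leading order strictly greater than $k$. Iterating and passing to the $\eps$-adic limit expresses $\Phi$ as a convergent composition of allowed generators.

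The degree bookkeeping and the commutation $[\mathcal D,\d_x]=0$ are routine. The step I expect to require the most care is the claim that the leading term of an element of $\Mi_{\d_x}^{(k)}$ is of Hamiltonian form $\d_x\frac{\delta\oh_k}{\delta u}$ with $\oh_k\in\Lambda_{u;k-1}$: this is exactly where the hypothesis $\Phi\in\Mi_{\d_x}$ enters, and it must be extracted from the recalled generation statement for $\Mi_{\d_x}$ (equivalently, from the fact that the infinitesimal symmetries of the Poisson operator $\d_x$ are precisely the Hamiltonian vector fields), checking that the lower-order contributions cancel so that the order-$\eps^k$ part is genuinely Hamiltonian.
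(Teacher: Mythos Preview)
Your proof is correct and follows essentially the same inductive peel-off-the-leading-term strategy as the paper. Two minor differences are worth noting. For the first inclusion, the paper simply declares normality ``obvious, because $\d_x\frac{\delta\oh}{\delta u}\in\Im(\d_x^2)$'', which literally only treats the leading term; your argument via $\mathcal D^{j}(u)=\d_x^2\mathcal D^{j-1}(R)$ actually fills this in for all orders. For the second inclusion, the paper reverses your order: it starts from normality, writes the leading term as $\d_x^2 P$, and then uses the $\Mi_{\d_x}$ condition via the direct computation $L(\tu)\circ\d_x\circ L(\tu)^\dagger=\d_x+\eps^k\d_x\circ\big(L(\d_x P)-L(\d_x P)^\dagger\big)\circ\d_x+O(\eps^{k+1})$ together with Dorfman's criterion to conclude $\d_x P=\frac{\delta\oh_1}{\delta u}$. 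This bypasses the point you flagged as ``requiring the most care'' (extracting the Hamiltonian form of the leading term from the generation statement for $\Mi_{\d_x}$) and is slightly more self-contained; your route via the generation of $\Mi_{\d_x}$ is equally valid but does need the observation that in the decomposition $\cdots\circ\Phi_{\og_2,\d_x}\circ\Phi_{\og_1,\d_x}$ the factors below order $k$ act trivially, which you correctly anticipate.
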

\begin{proof}
The fact that these Miura transformations are normal is obvious, because $\d_x\frac{\delta\oh}{\delta u}\in\Im(\d_x^2)$. Consider a Miura transformation from the group $\Mi_{\d_x}\cap\Mi_\norm$:
$$
u\mapsto\tu(u_*,\eps)=u+\eps^k\d_x^2 P+O(\eps^{k+1}),\quad k\ge 2,\quad P\in\cA_{u;k-2}.
$$
We compute 
$$
L(\tu(u_*,\eps))\circ\d_x\circ L(\tu(u_*,\eps))^\dagger=\d_x+\eps^k\d_x\circ\lb L(\d_x P)-L(\d_x P)^\dagger\rb\circ\d_x+O(\eps^{k+1}).
$$
Since our Miura transformation belongs to the group $\Mi_{\d_x}$, we conclude that $L(\d_x P)-L(\d_x P)^\dagger=0$ and therefore $\d_x P=\frac{\delta\oh_1}{\delta u}$ for some $\oh_1\in\Lambda_{u;k-1}$. By Corollary~\ref{corollary:simple}, $\frac{\d\oh_1}{\d u}=0$. Taking the composition of our Miura transformation with $\Phi_{-\oh_1,\d_x}$, we obtain a Miura transformation from the group $\Mi^{(k+1)}_{\d_x}\cap\Mi_\norm$. Continuing this procedure, we express our Miura transformation as the composition $\cdots\circ\Phi_{\oh_3.\d_x}\circ\Phi_{\oh_2,\d_x}\circ\Phi_{\oh_1,\d_x}$ with $\oh_i\in\Lambda_{u;k-2+i}$ satisfying $\frac{\d\oh_i}{\d u}=0$. 
\end{proof}

\medskip

It is well known that any Poisson operator~$K$ of degree $1$ satisfying $K|_{\eps=0}=
\d_x$ can be transformed to~$\d_x$ by some Miura transformation~\cite{Get02,DMS05}. The next proposition describes when such a Poisson operator can be transformed to $\d_x$ by a normal Miura transformation.

\medskip

\begin{proposition}\label{proposition:K to dx by normal}
A Poisson operator~$K$ of degree $1$ satisfying $K|_{\eps=0}=\d_x$ can be transformed to $\d_x$ by a normal Miura transformation if and only if $K=\tK\circ\d_x$ for some operator $\tK$. 
\end{proposition}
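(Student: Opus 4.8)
The plan is to translate the condition $K=\tK\circ\d_x$ into operator language and then treat the two implications separately: the ``only if'' is immediate, while the ``if'' is an $\eps$-adic normalization whose inductive step rests on a cohomological vanishing. I first record that, writing $K=\sum_{i\ge 0}K_i\,\d_x^i$ with $K_i\in\hcA_u$, one has $K=\tK\circ\d_x$ for some $\tK\in\hcDO_u$ if and only if the order-zero coefficient $K_0=K(1)$ vanishes (then $\tK=\sum_{i\ge 1}K_i\,\d_x^{i-1}$ works); conceptually, this says that $\int u\,dx$, the Casimir of $\d_x$, is a Casimir of $K$, since $\frac{\delta}{\delta u}\int u\,dx=1$. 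For the ``only if'' direction, assume a normal transformation $u\mapsto\tu=u+\d_x^2P$ gives $L(\tu)\circ K\circ L(\tu)^\dagger=\d_x$. Since the Fréchet derivative commutes with $\d_x$, we have $L(\tu)=1+\d_x^2\circ L(P)=1+O(\eps^2)$, invertible, and $L(\tu)^\dagger=1+L(P)^\dagger\circ\d_x^2$; both $L(\tu)^\dagger$ and its inverse fix the constant $1$ because $\d_x^2(1)=0$. Hence $K=L(\tu)^{-1}\circ\d_x\circ(L(\tu)^\dagger)^{-1}$ annihilates $1$, so $K_0=0$ and $K=\tK\circ\d_x$.

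For the ``if'' direction I normalize $K$ order by order in $\eps$ while keeping it right-divisible by $\d_x$; this is consistent because a direct expansion of $L(\tu)\circ(\,\cdot\,)\circ\d_x\circ L(\tu)^\dagger$ for normal $\tu$ shows every resulting term still ends in $\d_x$. Suppose we have reached $K=\d_x+\eps^kK_k+O(\eps^{k+1})$, where $K_k$ is skew-symmetric and, by right-divisibility, has vanishing order-zero coefficient; for $k=1$ a skew operator of differential degree $2$ with no order-zero term vanishes, which starts the induction. A normal transformation $u\mapsto u+\eps^k\d_x^2P$ alters the $\eps^k$-coefficient by $\d_x\bigl(L(\d_x P)-L(\d_x P)^\dagger\bigr)\d_x$ (using $\d_x\circ L(P)=L(\d_x P)$ and $L(P)^\dagger\circ\d_x=-L(\d_x P)^\dagger$), so the inductive step amounts to solving $K_k=-\d_x\bigl(L(\d_x P)-L(\d_x P)^\dagger\bigr)\d_x$ for $P$.

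To solve it I invoke the known general trivialization \cite{Get02,DMS05}: there is $g\in\cA_{u;k}$ with $K_k=-\bigl(L(g)\circ\d_x+\d_x\circ L(g)^\dagger\bigr)$. Comparing order-zero coefficients and using that $K_k$ has none yields $\d_x\,\frac{\delta}{\delta u}\!\int g\,dx=0$, hence $\frac{\delta}{\delta u}\!\int g\,dx=0$; by the vanishing criterion for local functionals this forces $g\in\Im(\d_x)$, say $g=\d_x\eta$ with $\eta\in\cA_{u;k-1}$. Then $K_k=-\d_x\bigl(L(\eta)-L(\eta)^\dagger\bigr)\d_x$, and since $A\mapsto\d_x\circ A\circ\d_x$ is injective the equation for $P$ reduces to $L(\eta-\d_x P)=L(\eta-\d_x P)^\dagger$, which by Dorfman's criterion holds as soon as $\eta-\d_x P\in\Im\bigl(\tfrac{\delta}{\delta u}\bigr)$.

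Everything therefore hinges on the decomposition lemma, which I expect to be the main obstacle: for every $d\ge 1$ one has $\cA_{u;d}=\Im(\d_x)+\frac{\delta}{\delta u}\bigl(\Lambda_{u;d}\bigr)$, i.e. each positive-degree differential polynomial is a total $x$-derivative plus a variational derivative. Granting it (applied to $\eta$), the required $P$ exists, the inductive step closes, and the $\eps$-adically convergent composition of the normal transformations sends $K$ to $\d_x$. I would prove the lemma from the exactness of the variational complex in one independent variable, realized by an explicit homotopy; equivalently, by induction on the differential order, lowering the highest derivative through integration by parts, which reduces it to surjectivity of the induced map $\oh\mapsto\bigl[\tfrac{\delta\oh}{\delta u}\bigr]$ on $\Lambda_{u;d}$. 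For this last point a preimage of the class of a homogeneous density $\psi$ is given, to leading order, by $\tfrac1m\int u\,\psi\,dx$ (with $m$ the number of $u$-factors of $\psi$), the remaining error having strictly fewer $u$-factors and being removed by descending induction; the delicate part, and the place I expect the real work to concentrate, is verifying that this descent terminates and genuinely produces every class.
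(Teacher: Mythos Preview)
Your proof is correct and follows essentially the same route as the paper: both invoke the Getzler--DMS trivialization at each order in $\eps$, observe that the absence of a $\d_x^0$-term forces the trivializer $g$ to be a total $x$-derivative $g=\d_x\eta$, and then upgrade to a normal transformation by arranging $\eta-\d_x P$ to be a variational derivative.

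The one place where you diverge from the paper is the step you single out as ``the main obstacle'' --- the decomposition $\cA_{u;d}=\Im(\d_x)+\frac{\delta}{\delta u}\bigl(\Lambda_{u;d}\bigr)$ for $d\ge 1$. You propose to derive it from the exactness of the variational complex or by a descent on differential order; the paper instead gives a one-line argument that sidesteps all of this. Given $\eta\in\cA_{u;d}$, choose any $r\in\cA_{u;d}$ with $\frac{\d r}{\d u}=\eta$ (simply antidifferentiate the $\mbC[[u]]$-coefficients of $\eta$ with respect to $u$). Then the tautological identity
\[
\frac{\d r}{\d u}=\frac{\delta r}{\delta u}-\sum_{s\ge 1}(-\d_x)^s\frac{\d r}{\d u_s}=\frac{\delta r}{\delta u}+\d_x\Bigl(\sum_{s\ge 1}(-\d_x)^{s-1}\frac{\d r}{\d u_s}\Bigr)
\]
yields the decomposition explicitly, with $P=\sum_{s\ge 1}(-\d_x)^{s-1}\frac{\d r}{\d u_s}\in\cA_{u;d-1}$. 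So your inductive step closes immediately, and the lemma you were worried about requires no separate proof.
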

\begin{proof}
For the ``only if'' part, we consider an arbitrary normal Miura transformation $u\mapsto\tu(u_*,\eps)=u+\d_x^2 P$ and compute $L(\tu(u_*,\eps))\circ\d_x\circ L(\tu(u_*,\eps))^\dagger=L(\tu(u_*,\eps))\circ\lb 1+\d_x\circ L(P)^\dagger
\circ \d_x\rb\circ\d_x$, which proves the claim.

\medskip

To prove the ``if'' part, consider a Poisson operator $K$ of degree $1$ satisfying $K|_{\eps=0}=\d_x$ and having the form $K=\tK\circ\d_x$. We will construct a required normal Miura transformation as the composition of elementary normal Miura transformations in the following way. Suppose that~$K$ has the form $K=\d_x+\eps^i K^{[i]}+O(\eps^{i+1})$, $i\ge 1$. We know that there exists an elementary Miura transformation $u\mapsto \tu(u_*,\eps)=u+\eps^i f_i$, $f_i\in\cA_{u;i}$, satisfying $K_\tu=\d_x+O(\eps^{i+1})$. On the other hand, 
$$
\left.K_{\tu}\right|_{\tu_n\mapsto u_n}=\d_x+\eps^i\lb\underline{L(f_i)\circ\d_x+\d_x\circ L(f_i)^\dagger+K^{[i]}}\rb+O(\eps^{i+1}).
$$
The coefficient of $\d_x^0$ in the underlined operator is equal to $\d_x\frac{\delta f_i}{\delta u}$. Since it is zero, we obtain $f_i=\d_x\tf_i$ for some $\tf_i\in\cA_{u;i-1}$. Then we have
$$
\left.K_{\tu}\right|_{\tu_n\mapsto u_n}=\d_x+\eps^i\lb\d_x\circ\lb L(\tf_i)- L(\tf_i)^\dagger\rb\circ\d_x+K^{[i]}\rb+O(\eps^{i+1}).
$$
Let us find a differential polynomial $r_i\in\cA_{u;i-1}$ such that $\tf_i=\frac{\d r_i}{\d u}$. Clearly we have
$$
\tf_i=\frac{\delta r}{\delta u}+\d_x\Big(\underbrace{\sum_{s\ge 1}(-\d_x)^{s-1}\frac{\d r_i}{\d u_s}}_{f_i':=}\Big)\quad\text{and}\quad L(\tf_i)- L(\tf_i)^\dagger=L(\d_x f_i')- L(\d_x f_i')^\dagger,
$$
which implies that the elementary normal Miura transformation $u\mapsto u'(u_*,\eps)=u+\eps^i\d_x^2 f_i'$ satisfies $K_{u'}=\d_x+O(\eps^{i+1})$. We can apply the same procedure for the operator $K_{u'}$ and so on. The required normal Miura transformation is then the composition of the constructed elementary normal Miura transformations.
\end{proof}

\medskip

\subsection{Partial CohFTs, F-CohFTs, and the DR hierarchies} \label{section:CohFTs}

Cohomological field theories (CohFTs), partial CohFTs, and F-CohFTs were introduced in \cite{KM94}, \cite{LRZ15}, and \cite{BR18}, respectively. For the general definitions, and a discussion on their differences, the reader is referred for instance to \cite[Section 3.1]{ABLR23}. In this paper we will only consider partial CohFTs and F-CohFTs of rank $1$, with phase space $V=\mbC$ and unit $e=1$. In the case of partial CohFTs, the metric $\eta$ will be always given by $\eta(e,e)=1$.

\medskip

So, for this paper, a partial CohFT is a family of cohomology classes 
$$
c_{g,n}\in H^\e(\oM_{g,n}),\quad 2g-2+n>0,
$$ 
invariant under permutation of the $n$ marked points and satisfying $c_{0,3}=1$, $\pi^*c_{g,n} = c_{g,n+1}$ for the map $\pi\colon\oM_{g,n+1} \to \oM_{g,n}$ forgetting the last marked point, and $\sigma^* c_{g_1+g_2,n_1+n_2}= c_{g_1,n_1+1} \times c_{g_2,n_2+1}$ for the map $\sigma\colon \oM_{g_1,n_1+1}\times \oM_{g_2,n_2+1} \to \oM_{g_1+g_2,n_1+n_2}$ joining two stable curves at their last marked points to form a nodal curve.

\medskip

The definition of an F-CohFT differs from the one of a partial CohFT in the additional requirement $n\ge 1$, while invariance is under permutation of the last $n-1$ marked points only. The rest of the axioms are the same, which means in particular that restricting a partial CohFT to moduli spaces with $n\ge 1$ gives an F-CohFT.

\medskip

For any $G\in\mbC$, consider the following partial CohFT:
$$
c_{g,n}^{\triv,G}:=G^g\in H^0(\oM_{g,n}).
$$
The corresponding F-CohFT will be denoted by the same symbol.

\medskip

For any formal power series $R(z)\in 1+z\mbC[[z]]$ satisfying $R(z)R(-z)=1$, there is a partial CohFT denoted by $\{R.c^{\triv,G}_{g,n}\}$ and defined by the standard Givental formula for CohFTs, as described for instance in \cite{PPZ15}, but where the sum runs over stable trees only, instead of the usual general stable graphs. In~\cite[Section~4]{ABLR23}, the authors defined an F-CohFT $\{R.c^{\triv,G}_{g,n+1}\}$ for an arbitrary $R(z)\in 1+z\mbC[[z]]$  by a similar formula where the sum runs again over stable trees only.

\medskip

For any F-CohFT $\{c_{g,n+1}\}$, we have the associated \emph{DR hierarchy}
$$
\frac{\d u}{\d t_d}=\d_x P_d,\quad d\ge 0,
$$
where 
$$
P_d:=\sum_{g,n\ge 0}\frac{\eps^{2g}}{n!}\sum_{\substack{d_1,\ldots,d_n\ge 0\\d_1+\ldots+d_n=2g}}\Coef_{a_1^{d_1}\cdots a_n^{d_n}}\lb\int_{\oM_{g,n+2}}\hspace{-0.7cm}\psi_2^d\lambda_g\DR_g\lb-\sum a_i,0,a_1,\ldots,a_n\rb c_{g,n+2}\rb u_{d_1}\cdots u_{d_n}.
$$
Here $\psi_i\in H^2(\oM_{g,n},\mbQ)$, for $1\leq i\leq n$, denotes the first Chern class of the $i$-th tautological line bundle on $\oM_{g,n}$ whose fiber over a curve is the cotangent line to the curve at the $i$-th marked point, $\lambda_g \in H^{2g}(\oM_{g,n},\mbQ)$ is the top Chern class of the rank $g$ Hodge bundle~$\mbE$ on~$\oM_{g,n}$ whose fiber over a curve is the $g$-dimensional vector space of holomorphic differentials on the curve, and $\DR_g(a_1,\ldots,a_n) \in H^{2g}(\oM_{g,n},\mbQ)$ is the double ramification cycle, a cohomology class, polynomial of degree $2g$ in the variables $a_1,\ldots,a_n$, which represents a compactification by relative stable maps to $\mbP^1$ of the locus of smooth curves whose marked points form the support of a principal divisor with multiplicities $a_1,\ldots,a_n$. More details on these classes and the DR hierarchy construction can be found in the original paper \cite{Bur15-DR-hierarchy} introducing the DR hierarchy and its F-CohFT counterpart \cite{ABLR21}.
 
\medskip

It is straightforward from the above definition in terms of intersection numbers that the differential polynomials $P_d$ have the form $P_d=\frac{u^{d+1}}{(d+1)!}+O(\eps)$. By~\cite[Theorem~3]{ABLR21}, the DR hierarchy is endowed with conserved quantities $\og_d=\int g_d dx$, $d\ge 0$, given~by
$$
g_d:=\sum_{g,n\ge 0}\frac{\eps^{2g}}{n!}\sum_{\substack{d_1,\ldots,d_n\ge 0\\d_1+\ldots+d_n=2g}}\Coef_{a_1^{d_1}\cdots a_n^{d_n}}\lb\int_{\oM_{g,n+1}}\hspace{-0.5cm}\psi_1^d\lambda_g\DR_g\lb-\sum a_i,a_1,\ldots,a_n\rb c_{g,n+1}\rb u_{d_1}\cdots u_{d_n}.
$$
So we have $D_{\d_xP_{d_1}}\lb\og_{d_2}\rb=0$ for any $d_1,d_2\ge 0$. The conserved quantities $\og_d$ have the form $\og_d=\int\lb\frac{u^{d+2}}{(d+2)!}+O(\eps)\rb dx$.

\medskip

If our F-CohFT is associated to a partial CohFT, then the DR hierarchy is Hamiltonian with Poisson operator $\d_x$ and Hamiltonians $\og_d$, i.e., $P_d=\frac{\delta\og_d}{\delta u}$. Moreover it is tau-symmetric with tau-structure $h_{d-1}=\frac{\delta \og_{d}}{\delta u}$, where $\og_{d}=\oh_d$, for $d\geq 0$, as shown in \cite{BDGR18}.

\medskip


\section{Tau-symmetric deformations of the Riemann hierarchy and partial CohFTs}\label{section:tau-symmetric deformations}

The \emph{Riemann hierarchy} is the following system of pairwise commuting flows:
$$
\frac{\d u}{\d t_d}=\frac{u^d}{d!}u_x,\quad d\in\mbZ_{\ge 0}.
$$
It is Hamiltonian with the Hamiltonians $\oh^\R_d:=\int\frac{u^{d+2}}{(d+2)!}dx$ and the Poisson operator $\d_x$, $\frac{u^d}{d!}u_x=\d_x\frac{\delta\oh_d^\R}{\delta u}$.

\medskip

\subsection{Deformations of the Riemann hierarchy}

\begin{definition}{\ }
\begin{enumerate}
\item A \emph{deformation} of the Riemann hierarchy is a sequence of differential polynomials $Q_d\in\hcA_{u;1}$, $d\ge 0$, such that
\begin{itemize}
\item $Q_d|_{\eps=0}=\frac{u^d}{d!}u_x$.
\item The flows $\frac{\d u}{\d t_d}=Q_d$, $d\ge 0$, pairwise commute.
\end{itemize}

\smallskip

\item A deformation of the Riemann hierarchy is called \emph{even} if $Q_d\in\cA_u[[\eps^2]]\subset\hcA_u$, $d\ge 0$.
\end{enumerate}
\end{definition}

\medskip

By Lemma~\ref{lemma:uniqueness}, a deformation of the Riemann hierarchy is uniquely determined by the differential polynomial $Q_1$. Note that for any deformation we have $Q_0=u_x$.

\medskip

\begin{definition}{\ }
\begin{enumerate}
\item A deformation of the Riemann hierarchy is called \emph{Hamiltonian} if it has the form
$$
\frac{\d u}{\d t_d}=K\frac{\delta\oh_d}{\delta u},\quad d\ge 0,
$$
where $\oh_d=\oh_d^\R+O(\eps)\in\hLambda_{u;0}$ and $K=\d_x+O(\eps)$ is a Poisson operator of degree~$1$.

\smallskip

\item A Hamiltonian deformation is called \emph{special} if $K=\d_x$ and $\frac{\d\oh_1}{\d u}=\oh_0$. Note that since $Q_1=u_x$ and $K=\d_x$ we have $\oh_0=\int\frac{u^2}{2}dx$.

\smallskip

\item A \emph{tau-symmetric deformation} of the Riemann hierarchy is a Hamiltonian deformation together with a choice of densities $h_d=\frac{u^{d+2}}{(d+2)!}+O(\eps)\in\hcA_{u;0}$ for the Hamiltonians $\oh_d$ satisfying the following properties:
\begin{itemize}
\item $\{h_{p-1},\oh_q\}_K=\{h_{q-1},\oh_p\}_K$, $p,q\ge 0$, where $h_{-1}:=u$.

\item $\{u,\oh_0\}_K=u_x$ and $\{u,\oh_{-1}\}_K=0$.
\end{itemize}
The collection of differential polynomials $h_d$ is called the \emph{tau-structure}.
\end{enumerate}
\end{definition}

\medskip

\begin{remark}\label{remark:about deformations}{\ }
\begin{enumerate}
\item For any special Hamiltonian deformation of the Riemann hierarchy, we have $\frac{\d\oh_d}{\d u}=\oh_{d-1}$ for all $d\ge 1$. This is proved by writing 
$$
0=\frac{\d}{\d u}\{\oh_1,\oh_d\}_{\d_x}-\{\oh_1,\oh_{d-1}\}_{\d_x}=\left\{\oh_1,\frac{\d}{\d u}\oh_d-\oh_{d-1}\right\}_{\d_x},
$$
which, since $\frac{\d}{\d u}\oh_d-\oh_{d-1}=O(\eps)$, implies that $\frac{\d}{\d u}\oh_d-\oh_{d-1}=0$. 

\smallskip

\item By \cite[Lemma~3.3]{BDGR18}, if for a given Hamiltonian deformation of the Riemann hierarchy a tau-structure exists, then it is unique and moreover is given by (see equation~(3.17) in \cite{BDGR18})
$$
\d_x h_{p-1}=K\frac{\delta\oh_p}{\delta u},\quad p\ge 0.
$$

\smallskip

\item The previous equality implies that if $K=\d_x$, then $\oh_{p-1}=\frac{\d\oh_p}{\d u}$, and therefore the Hamiltonian deformation is special. Conversely, by \cite[Proposition~3.1]{BDGR18}, an arbitrary special Hamiltonian deformation of the Riemann hierarchy is tau-symmetric: a tau-structure is given by $h_{p-1}=\frac{\delta\oh_p}{\delta u}$, $p\ge 0$. Thus, a Hamiltonian deformation of the Riemann hierarchy with $K=\d_x$ is tau-symmetric if and only if it is special. 
\end{enumerate}
\end{remark}

\medskip

Normal Miura transformations preserve the property of being tau-symmetric. Indeed, for an arbitrary tau-symmetric deformation of the Riemann hierarchy and a normal Miura transformation~\eqref{eq:normal Miura}, the differential polynomials:
$$
\th_d=\left.\lb h_d+\d_x\{P,\oh_{d+1}\}_K\rb\right|_{u_l\mapsto\d_x^l u(\tu_*,\eps)},\quad d\ge -1,
$$
define a tau-structure for the transformed hierarchy. A converse statement is also true (see the second part of the following proposition).

\medskip

\begin{proposition}\label{proposition:tau-symmetric to special}{\ }
\begin{enumerate}
\item Any tau-symmetric deformation of the Riemann hierarchy can be transformed to a special Hamiltonian deformation by a suitable normal Miura transformation.

\smallskip

\item Consider an arbitrary tau-symmetric deformation of the Riemann hierarchy and a Miura transformation $\Phi\in\Mi^{(2)}$. Then the transformed hierarchy is tau-symmetric if and only if $\Phi$ is normal.
\end{enumerate}
\end{proposition}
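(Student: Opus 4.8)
The plan for part (1) is to reduce everything to Proposition~\ref{proposition:K to dx by normal} by first extracting from tau-symmetry the structural condition $K=\tK\circ\d_x$. The decisive point is that this follows at once from the otherwise-unused axiom $\{u,\oh_{-1}\}_K=0$: since $h_{-1}=u$, the associated Hamiltonian is $\oh_{-1}=\int u\,dx$, so $\frac{\delta\oh_{-1}}{\delta u}=1$ and
\[0=\{u,\oh_{-1}\}_K=K\frac{\delta\oh_{-1}}{\delta u}=K(1).\]
Writing $K=\sum_{i\ge 0}K_i\d_x^i$ with coefficients on the left, $K(1)$ is exactly the $\d_x^0$-coefficient $K_0$, so $K_0=0$ and $K=\tK\circ\d_x$ with $\tK=\sum_{i\ge 1}K_i\d_x^{i-1}$. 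Proposition~\ref{proposition:K to dx by normal} then produces a normal Miura transformation sending $K$ to $\d_x$; as normal Miura transformations preserve tau-symmetry (the polynomials $\th_d$ displayed before the statement furnishing the new tau-structure), the transformed hierarchy is tau-symmetric with Poisson operator $\d_x$, hence special by Remark~\ref{remark:about deformations}(3). This finishes part (1), and it is essentially formal once the Casimir identity $K(1)=0$ is in hand.

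For part (2) the ``if'' direction is precisely the remark preceding the statement. For ``only if'' I would reduce to part (1). Denote the given deformation by $\cD$. Using part (1), choose a normal $N$ with $N(\cD)$ special; since $\Mi_\norm$ is a subgroup of $\Mi^{(2)}$ and $\Phi$ is normal if and only if $\Phi\circ N^{-1}$ is, I may assume $\cD$ itself is special, so $K=\d_x$. By hypothesis $\Phi(\cD)$ is tau-symmetric, so part (1) yields a normal $\Psi$ with $\Psi\Phi(\cD)$ special; thus $\Psi\Phi$ carries the special deformation $\cD$ to the special deformation $\Psi\Phi(\cD)$, preserves $\d_x$, and lies in $\Mi^{(2)}$. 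Hence $\Psi\Phi\in\Mi_{\d_x}\cap\Mi^{(2)}$.

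It then remains to prove the key claim that any $G\in\Mi_{\d_x}\cap\Mi^{(2)}$ taking one special deformation to another is normal; granting it, $\Psi\Phi\in\Mi_\norm$, and therefore $\Phi=\Psi^{-1}(\Psi\Phi)\in\Mi_\norm$, using that $\Mi_\norm$ is a group. To prove the claim I would expand $G$ in the generators $\Phi_{\oh,\d_x}$ ($\oh\in\Lambda_{u;k-1}$, $k\ge 2$) of $\Mi_{\d_x}$ and show, by induction on the $\eps$-order, that each must satisfy $\frac{\d\oh}{\d u}=0$ — which, by the Lemma characterizing $\Mi_{\d_x}\cap\Mi_\norm$, is exactly membership in $\Mi_\norm$. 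At the lowest order $\eps^k$ at which a generator with $\frac{\d\oh}{\d u}\ne 0$ could occur, one transforms the Hamiltonians $\oh_p$ by the explicit $\Ad_{\oh,\d_x}$-series recorded before Proposition~\ref{proposition:K to dx by normal} and tests the speciality relation $\frac{\d\oh_p}{\d u}=\oh_{p-1}$ from Remark~\ref{remark:about deformations}(1), which should force $\frac{\d\oh}{\d u}=0$. I expect this order-by-order control of the $\Ad_{\oh,\d_x}$-corrections to the Hamiltonian densities, together with the extraction of the vanishing of $\frac{\d\oh}{\d u}$ from it, to be the main technical obstacle; the reductions above and part (1) are formal by comparison.
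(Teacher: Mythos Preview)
Your proposal is correct and follows essentially the same route as the paper. For part (1) the paper argues identically: from $\{u,\int u\,dx\}_K=0$ one gets $\Coef_{\d_x^0}K=0$, applies Proposition~\ref{proposition:K to dx by normal}, and invokes Remark~\ref{remark:about deformations}(3). For part (2) the paper performs the same reduction (normalizing both source and target to special deformations, so that the composed transformation lies in $\Mi_{\d_x}^{(2)}$), then expands in the generators $\Phi_{\og_k,\d_x}$ and runs the induction you outline; the only refinement is that the paper tests the speciality relation specifically at $p=1$, where $\frac{\d\oh_1}{\d u}=\int\frac{u^2}{2}dx$ is $\Mi_{\d_x}$-invariant, so the identity $\frac{\d}{\d u}\{\oh_1,\og_k\}_{\d_x}=\{\int\frac{u^2}{2}dx,\og_k\}_{\d_x}+\{\oh_1,\frac{\d\og_k}{\d u}\}_{\d_x}$ collapses directly to $\{\oh_1,\frac{\d\og_k}{\d u}\}_{\d_x}=0$, forcing $\frac{\d\og_k}{\d u}=0$ since $\deg\og_k\ge 1$.
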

\begin{proof}
1. Since $\{u,\int u dx\}_K=0$, we have $\Coef_{\d_x^0}K=0$, and therefore, by Proposition~\ref{proposition:K to dx by normal}, there exists a normal Miura transformation transforming the operator $K$ to $\d_x$. By Remark~\ref{remark:about deformations}, the resulting Hamiltonian deformation is special.

\medskip

2. We only have to prove the ``only if'' part. So we consider an arbitrary tau-symmetric deformation of the Riemann hierarchy and a Miura transformation $\Phi\in\Mi^{(2)}$ such that the transformed hierarchy is tau-symmetric. By Part~1, both the initial hierarchy and the transformed hierarchy can be transformed to special Hamiltonian deformations of the Riemann hierarchy by some normal Miura transformations, denote these Miura transformations by $\Phi_1,\Phi_2\in\Mi_\norm$, respectively. It is enough to prove that $\Phi_2\circ\Phi\circ\Phi_1^{-1}\in\Mi_\norm$. The Miura transformation $\Phi_2\circ\Phi\circ\Phi_1^{-1}$ transforms a special Hamiltonian transformation of the Riemann hierarchy to another special Hamiltonian deformation. So without loss of generality we may assume that our initial hierarchy is a special Hamiltonian deformation of the Riemann hierarchy and $\Phi\in\Mi^{(2)}_{\d_x}$.

\medskip

We know that our Miura transformation has the form $\cdots\circ\Phi_{\og_{k+2},\d_x}\circ\Phi_{\og_{k+1},\d_x}\circ\Phi_{\og_k,\d_x}$, $\og_i\in\Lambda_{u;i-1}$, $k\ge 2$. Then
$$
\oh_1[\tu]=\left.\lb\oh_1-\eps^k\{\oh_1,\og_k\}_{\d_x}\rb\right|_{u_l\mapsto\tu_l}+O(\eps^{k+1}).
$$
Since $\frac{\d h_1}{\d u}=\int\frac{u^2}{2}dx$ and $\frac{\d h_1[\tu]}{\d \tu}=\int\frac{\tu^2}{2}dx$, we have $\frac{\d}{\d u}\{\oh_1,\og_k\}_{\d_x}=0$. Therefore, 
$$
0=\frac{\d}{\d u}\{\oh_1,\og_k\}_{\d_x}=\left\{\int\frac{u^2}{2},\og_k\right\}_{\d_x}+\left\{\oh_1,\frac{\d}{\d u}\og_k\right\}_{\d_x}=\left\{\oh_1,\frac{\d}{\d u}\og_k\right\}_{\d_x}.
$$
and since $\deg\og_k\ge 1$ we obtain $\frac{\d}{\d u}\og_k=0$. Thus, $\Phi_{\og_k,\d_x}\in\Mi^{(2)}_{\d_x}\cap\Mi_\norm$. Applying the same argument to the Miura transformation $\cdots\circ\Phi_{\og_{k+2},\d_x}\circ\Phi_{\og_{k+1},\d_x}$ and so on, we conclude that $\frac{\d}{\d u}\og_i=0$ for all $i$ and hence $\Phi\in\Mi_\norm$.
\end{proof}

\medskip

Part 1 of the proposition immediately implies the following statement.

\begin{corollary}
The set of equivalence classes of tau-symmetric deformations of the Riemann hierarchy under normal Miura transformations is in one-to-one correspondence with the set of equivalence classes of special Hamiltonian deformations of the Riemann hierarchy under Miura transformations from the group $\Mi_{\d_x}\cap\Mi_\norm$. 
\end{corollary}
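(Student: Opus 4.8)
The plan is to exhibit the correspondence as the descent to equivalence classes of the natural inclusion of special Hamiltonian deformations into tau-symmetric ones. First I would set up the two orbit spaces. Let $\cS$ denote the set of special Hamiltonian deformations and $\cT$ the set of tau-symmetric deformations of the Riemann hierarchy. By Remark~\ref{remark:about deformations}, every special Hamiltonian deformation is in particular tau-symmetric, so there is a natural inclusion $\iota\colon\cS\hookrightarrow\cT$. Moreover $\Mi_{\d_x}\cap\Mi_\norm$ acts on $\cS$: a transformation in this group is normal, hence preserves tau-symmetry (as recalled just before Proposition~\ref{proposition:tau-symmetric to special}), and lies in $\Mi_{\d_x}$, hence preserves the operator $\d_x$; by the equivalence in Remark~\ref{remark:about deformations}(3) (``$K=\d_x$ and tau-symmetric'' $\Leftrightarrow$ ``special''), it therefore sends $\cS$ to $\cS$. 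Since $\Mi_{\d_x}\cap\Mi_\norm\subset\Mi_\norm$, the inclusion $\iota$ is equivariant and descends to a map
$$
\bar\iota\colon\cS/(\Mi_{\d_x}\cap\Mi_\norm)\longrightarrow\cT/\Mi_\norm,
$$
and it remains to prove that $\bar\iota$ is a bijection.

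Surjectivity is exactly Part~1 of Proposition~\ref{proposition:tau-symmetric to special}: any tau-symmetric deformation is carried to a special Hamiltonian one by some normal Miura transformation, so every $\Mi_\norm$-orbit in $\cT$ meets $\cS$. Injectivity is equivalent to the assertion that two special Hamiltonian deformations lying in the same $\Mi_\norm$-orbit already lie in the same $(\Mi_{\d_x}\cap\Mi_\norm)$-orbit, and this reduces to the following key statement, which I expect to be the main obstacle: \emph{if $\Phi\in\Mi_\norm$ satisfies $\Phi\cdot S_1=S_2$ with $S_1,S_2$ both special, then $\Phi\in\Mi_{\d_x}$.}

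To prove the key statement I would transport Poisson structures. The operator of $S_1$ is $\d_x$; pushing it through $\Phi$ yields a Poisson operator $K_2:=\Phi_*(\d_x)$ for which $S_2$ is Hamiltonian. Since $\Phi$ is normal, the computation in the ``only if'' part of the proof of Proposition~\ref{proposition:K to dx by normal} shows that $K_2$ has the form $K_2=\tK\circ\d_x$ with $\tK=1+O(\eps)$. On the other hand $S_2$ is special, hence Hamiltonian with operator $\d_x$. Thus $S_2$ carries two Poisson structures of degree $1$ and leading term $\d_x$, both of the form $\tK\circ\d_x$, and the heart of the matter is the rigidity claim that such a structure is unique, forcing $K_2=\d_x$. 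Granting this, $\Phi_*(\d_x)=\d_x$, i.e. $\Phi$ stabilizes $\d_x$, which by definition means $\Phi\in\Mi_{\d_x}\cap\Mi_\norm$; injectivity of $\bar\iota$ follows, and together with surjectivity this gives the desired bijection.

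The rigidity claim is the step requiring the most care. I would establish it by an order-by-order analysis in $\eps$ in the spirit of the computation in the proof of Part~2 of Proposition~\ref{proposition:tau-symmetric to special}: writing $\tK=1+\eps^i\tK^{[i]}+O(\eps^{i+1})$ and using that the two structures reproduce the same flows $Q_d$ of $S_2$, together with the normalizations pinning down the Hamiltonians uniquely (Lemma~\ref{lemma:uniqueness} and the relation $\tfrac{\d\oh_d}{\d u}=\oh_{d-1}$ of Remark~\ref{remark:about deformations}(1)), one should conclude that the leading correction $\tK^{[i]}$ vanishes and then induct on $i$. Alternatively, the same conclusion can be reached by tracking the unique tau-structure (Remark~\ref{remark:about deformations}(2)) through the transformation formula for $\th_d$ recorded before Proposition~\ref{proposition:tau-symmetric to special} and imposing the normalization $\tfrac{\d\oh_1}{\d u}=\int\tfrac{u^2}{2}dx$ on both $S_1$ and $S_2$, exactly as in the $\og_k$-computation of that proof. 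Either way, isolating and justifying this uniqueness of the special Hamiltonian structure is the one genuinely nontrivial ingredient; everything else is formal consequence of Part~1 and the group inclusion $\Mi_{\d_x}\cap\Mi_\norm\subset\Mi_\norm$.
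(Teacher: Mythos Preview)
Your setup and the reduction to the ``key statement'' are correct, but you have misidentified where the work lies. The paper regards the corollary as an immediate consequence of Part~1 of Proposition~\ref{proposition:tau-symmetric to special} alone, with no rigidity argument needed for injectivity. The point is that a tau-symmetric deformation, by definition, comes equipped with its tau-structure $\{h_d\}_{d\ge -1}$ and, through it, with its Poisson operator $K$ (via $\d_x h_{p-1}=K\frac{\delta\oh_p}{\delta u}$, Remark~\ref{remark:about deformations}(2)); the action of $\Mi_\norm$ on $\cT$ transports this entire package by the explicit formula for $\th_d$ recorded just before Proposition~\ref{proposition:tau-symmetric to special}. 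Hence if $\Phi\in\Mi_\norm$ sends $\iota(S_1)$ to $\iota(S_2)$ \emph{as tau-symmetric deformations}, it in particular sends the Poisson operator $\d_x$ of $\iota(S_1)$ to the Poisson operator $\d_x$ of $\iota(S_2)$, i.e.\ $\Phi_*(\d_x)=\d_x$, so $\Phi\in\Mi_{\d_x}\cap\Mi_\norm$ tautologically. There is no comparison of two a priori distinct Hamiltonian structures on the same flows, and your ``rigidity claim'' is never invoked.

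Your longer route would become relevant only if equivalence were taken at the level of the underlying flows, forgetting the tau-structure data; under that reading the uniqueness of the Poisson operator would indeed be the crux, and your sketched order-by-order induction would have to be carried out in full (which you do not do). But that is not the statement being proved.
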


\medskip

\subsection{Dubrovin--Liu--Yang--Zhang conjecture and partial CohFTs}

\begin{definition}{\ }
\begin{enumerate}
\item We will say that a tau-symmetric deformation of the Riemann hierarchy is in the \emph{generalized standard form} if $K=\d_x$ and the Hamiltonian $\oh_1$ has the form
\begin{gather*}
\oh_1=\int\left(\frac{u^3}{6}+\eps^2 a u_x^2+\sum_{k\ge 4}\eps^k\sum_{\lambda\in\cP_k'}a_\lambda u_\lambda\right)dx,\quad a,a_\lambda\in\mbC.
\end{gather*}

\smallskip

\item We will say that a tau-symmetric deformation of the Riemann hierarchy is in the \emph{standard form} if it is in the generalized standard form where all the coefficients~$a_\lambda$ with odd~$|\lambda|$ vanish. This notion was introduced in~\cite{DLYZ16}.
\end{enumerate}
\end{definition}

\medskip

\begin{conjecture}[Conjecture 6.1 in~\cite{DLYZ16}]\label{conjecture:DLYZ}
Consider a tau-symmetric deformation of the Riemann hierarchy.
\begin{enumerate}
\item There exists a unique normal Miura transformation $u\mapsto\tu(u_*,\eps)$ such that the transformed hierarchy is in the standard form: $K_{\tu}=\d_x$,
$$
\oh_1[\tu]=\int\left(\frac{\tu^3}{6}+\eps^2 a \tu_x^2+\sum_{g\ge 2}\eps^{2g}\sum_{\lambda\in\cP_{2g}'}a_\lambda \tu_\lambda\right)dx,\quad a,a_\lambda\in\mbC.
$$

\smallskip

\item If $a=0$, then $a_\lambda=0$ for all $\lambda$.

\smallskip

\item If $a\ne 0$, then all the coefficients $a_\lambda$ are uniquely determined in terms of $a$ and $a_{(2^g)}$, $g\ge 2$.

\end{enumerate}
\end{conjecture}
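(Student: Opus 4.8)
The plan is to reduce the whole problem to the action of the group $\Mi_{\d_x}\cap\Mi_\norm$ on the single Hamiltonian $\oh_1$ of a \emph{special} Hamiltonian deformation, and then to extract the three assertions order by order in $\eps$. By Part~1 of Proposition~\ref{proposition:tau-symmetric to special} it suffices to treat special Hamiltonian deformations ($K=\d_x$), for which the entire hierarchy is determined by $\oh_1$ via Lemma~\ref{lemma:uniqueness} and the residual symmetry group is generated (by the lemma describing the generators of $\Mi_{\d_x}\cap\Mi_\norm$) by the transformations $\Phi_{\oh,\d_x}$ with $\frac{\d\oh}{\d u}=0$. Normalizing the density of $\oh_1$ by Lemma~\ref{lemma:unique density}, so that at each order $\eps^k$ it is a combination of the monomials $u_\lambda$, $\lambda\in\cP_k^\circ$, I would track the leading effect of $\Phi_{\oh,\d_x}$, which changes $\oh_1$ by $\eps^k\{\oh,\oh_1\}_{\d_x}+O(\eps^{k+1})$.

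First I would establish Part~1. Existence of a normal Miura transformation bringing the deformation to the generalized standard form is Theorem~\ref{theorem:generalized standard form}, and its engine is Lemma~\ref{lemma:differentiating ulambda}: if the density of $\oh$ has leading term $u_\mu$, then modulo $\Im(\d_x)$ the leading change of the density of $\oh_1$ is the class of $D_{uu_x}(u_\mu)=(|\mu|+l(\mu)-m_1(\mu)-1)\,u_{(\mu,1)}$ plus strictly lower terms in the lexicographic order, with diagonal coefficient vanishing only for $\mu=(1)$. Since the monomials indexed by $\cP_k^\circ\setminus\cP_k'$ are exactly those carrying a part equal to $1$, this triangularity lets me solve recursively for $\oh$ and remove all of them, reaching the $\cP_k'$ form; the same triangularity shows the stabilizer of the generalized standard form is trivial, so the normalizing transformation is unique. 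To upgrade from generalized to standard form I must show the coefficients $a_\lambda$ with odd $|\lambda|$ vanish. The guiding principle is that tau-symmetry should force a pure $\eps^{2g}$ (genus) expansion, which I would exploit through the involution $\eps\mapsto-\eps$: it sends a tau-symmetric deformation to another one, again in generalized standard form but with $a_\lambda\mapsto(-1)^{|\lambda|}a_\lambda$; the crux is to show this image lies in the same normal-Miura class, whence the uniqueness just established forces the odd coefficients to vanish.

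Parts~2 and 3 are constraints on the surviving even coefficients, and here I would impose the full integrability, namely that $\oh_1$ extends to an entire tower of commuting, tau-symmetric Hamiltonians $\{\oh_d\}$ with $h_{p-1}=\frac{\delta\oh_p}{\delta u}$ and $\{h_{p-1},\oh_q\}_{\d_x}=\{h_{q-1},\oh_p\}_{\d_x}$ (Remark~\ref{remark:about deformations}). Expanding these relations order by order and again using Lemma~\ref{lemma:differentiating ulambda} to isolate leading terms, I would read off a recursion expressing each $a_\lambda$ in terms of lower-order data. For Part~2, when $a=0$ this recursion should admit only the trivial solution, so the deformation is Miura-trivial and all $a_\lambda$ vanish, with $a$ thus playing the role of the single essential invariant. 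For Part~3, when $a\neq 0$ the recursion should determine every $a_\lambda$ from $a$ together with the diagonal data $a_{(2^g)}$, which remain free.

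The main obstacle is precisely the integrability input used in Parts~2 and 3 and in the odd-vanishing half of Part~1: controlling, inside the formal variational calculus, the full system of commutativity and tau-symmetry relations is delicate, since each new order introduces many monomials and the resulting recursion is nonlinear. This is exactly why the statement is only conjectural. The route I regard as most promising---and the one developed in the rest of the paper---is to transport the rigidity from the geometric side: realize the admissible standard forms as DR hierarchies of rank-one partial CohFTs, where $a$ corresponds to the Givental $R$-matrix / Hodge-class deformation parameter and the $a_{(2^g)}$ to the remaining free data (Theorem~\ref{theorem:DR for rank one partial CohFTs}), and deduce the relations among the $a_\lambda$ from the structure of $\lambda_g$-intersection numbers; the open point is to show that every tau-symmetric deformation arises in this way.
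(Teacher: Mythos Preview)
This statement is a \emph{conjecture} in the paper, not a theorem; the paper does not prove it. What the paper does prove is the weaker Theorem~\ref{theorem:generalized standard form} (unique reduction to the \emph{generalized} standard form, which still allows odd-$|\lambda|$ terms), and separately that Parts~1 and~3 would follow from the ALM Conjecture~\ref{ALM-conjecture}. You acknowledge this explicitly, and your outline for the generalized standard form---reduce to a special Hamiltonian deformation via Proposition~\ref{proposition:tau-symmetric to special}, then run the triangular elimination driven by Lemma~\ref{lemma:differentiating ulambda}---matches the paper's proof of Theorem~\ref{theorem:generalized standard form} exactly.

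Where your strategy departs from the paper is the passage from generalized to genuine standard form. The $\eps\mapsto-\eps$ idea, as stated, does not gain ground: you correctly flag that the crux is showing the flipped hierarchy lies in the same normal-Miura class as the original, but by the uniqueness you have just established two generalized standard forms are equivalent if and only if they are literally equal, so this ``crux'' is equivalent to the odd-vanishing you are trying to prove and no independent mechanism is offered. The paper's conditional argument is entirely different and is essentially the geometric route you sketch at the end: a generalized standard form already has its $\frac{\d}{\d t_1}$-flow in the ALM normal form with constant $c_\lambda$ and $c_{(2^g)}=0$, so Part~1 of Conjecture~\ref{ALM-conjecture} forces the odd terms to vanish, and then Part~2 of Conjecture~\ref{ALM-conjecture} together with Theorem~\ref{theorem:DR hierarchy for F-CohFT} identifies the hierarchy with the DR hierarchy of an explicit F-CohFT, yielding Part~3. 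Part~2 of Conjecture~\ref{conjecture:DLYZ} is not addressed by the paper even conditionally.
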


\medskip

We can prove a weaker version of the first part of the conjecture.

\medskip

\begin{theorem}\label{theorem:generalized standard form}
For any tau-symmetric deformation of the Riemann hierarchy, there exists a unique normal Miura transformation $u\mapsto\tu(u_*,\eps)$ such that the transformed hierarchy is in the generalized standard form.
\end{theorem}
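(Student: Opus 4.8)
The plan is to reduce first to the special Hamiltonian case and then normalize order by order in $\eps$, exploiting the fact that the action of the relevant Miura transformations at each order is controlled by the evolutionary operator $D_{uu_x}$ through Lemma~\ref{lemma:differentiating ulambda}. By Part 1 of Proposition~\ref{proposition:tau-symmetric to special} a normal Miura transformation brings the given deformation to a special Hamiltonian one, so I may assume $K=\d_x$. Any further normal Miura transformation keeping us among special Hamiltonian deformations (in particular among generalized standard forms, which have $K=\d_x$) must preserve $\d_x$, hence lies in $\Mi_{\d_x}\cap\Mi_\norm$; so from now on I only use transformations from this group, which is generated by the $\Phi_{\oh,\d_x}$ with $\oh\in\Lambda_{u;k-1}$, $k\ge 2$, $\frac{\d\oh}{\d u}=0$.

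A key preliminary observation is that specialness already forces all coefficients in the canonical density of $\oh_1$ to be constants. Writing the unique density of Lemma~\ref{lemma:unique density}(1) as $h_1=\frac{u^3}{6}+\sum_{k\ge2}\eps^k\sum_{\lambda\in\cP_k^\circ}f_\lambda(u)u_\lambda$, the relation $\frac{\d\oh_1}{\d u}=\oh_0=\int\frac{u^2}{2}dx$ (valid since $Q_0=u_x$ and $K=\d_x$) yields $\int\sum_{\lambda\in\cP_k^\circ}f_\lambda'(u)u_\lambda\,dx=0$ for every $k$; since this integrand is itself in the canonical form of Lemma~\ref{lemma:unique density}(1), its uniqueness forces $f_\lambda'\equiv0$, i.e. every $f_\lambda$ is a constant. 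In particular the order-$\eps^2$ term is automatically $a\,u_x^2$ with $a\in\mbC$, which is already generalized-standard at this order; and there is nothing further to do at order $\eps^2$, because $\Lambda_{u;1}=0$ means there are no order-$\eps^2$ transformations in $\Mi_{\d_x}\cap\Mi_\norm$.

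Next I would normalize order by order for $m\ge3$. Applying $\Phi_{\oh,\d_x}$ with $\oh\in\Lambda_{u;m-1}$ changes $\oh_1$ only at orders $\ge\eps^m$, and the order-$\eps^m$ change is exactly $\pm D_{uu_x}(\oh)=\pm\int\frac{\delta\oh}{\delta u}uu_x\,dx$ (using $\frac{\delta\oh_1}{\delta u}=\frac{u^2}{2}+O(\eps)$; the $\Ad^{\ge2}$ and $O(\eps^{2m})$ contributions are higher order). Writing $\oh$ in the canonical basis $\{\int u_\rho\,dx:\rho\in\cP_{m-1}^\circ\}$ of $\Lambda_{u;m-1}$ and invoking Lemma~\ref{lemma:differentiating ulambda} gives $D_{uu_x}(\int u_\rho\,dx)=(m-2+l(\rho)-m_1(\rho))\int u_{(\rho,1)}\,dx+\sum_{\mu<(\rho,1)}(\cdots)\int u_\mu\,dx$, where for $m\ge3$ the leading coefficient is strictly positive. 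The hard, but combinatorially clean, core is the resulting linear-algebra step: the assignment $\rho\mapsto(\rho,1)$ is a bijection $\cP_{m-1}^\circ\to\cP_m^\circ\setminus\cP_m'$, so the vectors $D_{uu_x}(\int u_\rho\,dx)$ have distinct lexicographically-maximal terms running exactly over $\cP_m^\circ\setminus\cP_m'$ (lower terms re-expand, via Lemma~\ref{lemma:unique density}(2), into strictly smaller partitions). Hence $D_{uu_x}$ is injective and $\Lambda_{u;m}=\Im(D_{uu_x}|_{\Lambda_{u;m-1}})\oplus W_m$ with $W_m:=\spn\{\int u_\lambda\,dx:\lambda\in\cP_m'\}$ the space of admissible order-$\eps^m$ terms. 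This decomposition lets me push the $\eps^m$ term of $\oh_1$ into $W_m$ by a unique transformation; specialness is preserved at each step, so constancy of coefficients persists, and composing all these with the initial reduction produces a normal Miura transformation into generalized standard form.

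For uniqueness, if $\Phi,\Phi'$ both produce generalized standard forms, then $\Psi:=\Phi'\circ\Phi^{-1}\in\Mi_\norm$ sends one such form (with $K=\d_x$) to another, hence preserves $\d_x$ and lies in $\Mi_{\d_x}\cap\Mi_\norm$. If $\Psi\ne\Id$, let $\eps^m$ (with $m\ge3$, since order $\eps^2$ is trivial) be its lowest nontrivial order and $\oh_m\in\Lambda_{u;m-1}$ its leading generator; then $\pm D_{uu_x}(\oh_m)$ equals the order-$\eps^m$ difference of two generalized standard forms and so lies in $W_m$. But $\Im(D_{uu_x})\cap W_m=0$ and $D_{uu_x}$ is injective, forcing $\oh_m=0$, a contradiction, so $\Psi=\Id$. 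I expect the injectivity and transversality of $D_{uu_x}$ relative to the lexicographic filtration to be the main obstacle, since this is precisely where the combinatorics of Lemma~\ref{lemma:differentiating ulambda}, the partition sets $\cP^\circ$ and $\cP'$, and the exceptional rigidity at $m=2$ must all be reconciled.
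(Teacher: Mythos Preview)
Your proof is correct and follows essentially the same route as the paper: reduce to a special Hamiltonian deformation via Proposition~\ref{proposition:tau-symmetric to special}, observe that $\frac{\d\oh_1}{\d u}=\oh_0$ forces the coefficients $f_\lambda$ to be constants, and then exploit the upper-triangular action of $D_{uu_x}$ with respect to the lexicographic order (via Lemma~\ref{lemma:differentiating ulambda} and Lemma~\ref{lemma:unique density}) to eliminate the coefficients indexed by $\cP_m^\circ\setminus\cP_m'$ order by order in~$\eps$. The only notable difference is that the paper defers to~\cite{DLYZ16} for uniqueness, whereas you supply your own (correct) argument based on the same injectivity and transversality of $D_{uu_x}$.
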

\begin{proof}
The uniqueness was already proved in~\cite[pages 431--432]{DLYZ16}.

\medskip

Let us prove the existence. By Proposition~\ref{proposition:tau-symmetric to special}, we can assume that our deformation is a special Hamiltonian deformation of the Riemann hierarchy,
$$
\oh_1=\int\left(\frac{u^3}{6}+\eps^2 a u_x^2+\sum_{k\ge 3}\eps^k\sum_{\lambda\in\cP_k^\circ}a_\lambda u_\lambda\right)dx,\quad a,a_\lambda\in\mbC,
$$
where the fact that the coefficients $a$ and $a_\lambda$ are constants follows from Corollary~\ref{corollary:simple}. Let $k_0$ be the minimal $k$ such that there exists a partition $\lambda\in\cP_k^\circ\backslash\cP_k'$ such that $a_\lambda\ne 0$. Then let 
\begin{align*}
&\tlambda:=\max\left\{\left.\lambda\in\cP_{k_0}^\circ\backslash\cP_{k_0}'\right|a_\lambda\ne 0\right\},\quad\tlambda=(\tlambda_1,\ldots,\tlambda_l,1),\\
&\oh:=-\frac{a_{\tlambda}}{k_0+l-m_1(\tlambda)-1}\int\prod_{i=1}^l u_{\tlambda_i}dx\in\Lambda_{u;k_0-1}.
\end{align*}
Using Lemma~\ref{lemma:differentiating ulambda}, we see that the Miura transformation $\Phi_{\oh,\d_x}\in\Mi_{\d_x}\cap\Mi_\norm$ doesn't change the coefficient of~$\eps^i$ with $i<k_0$ in~$\oh_1$, doesn't change the coefficients $a_\lambda$ with $|\lambda|=k_0$ and $\lambda>\tlambda$, and kills the coefficient $a_{\tlambda}$. Then we apply the same procedure, and after a finite number of steps we will kill all the coefficients $a_\lambda$ with $\lambda\in\cP_{k_0}^\circ\backslash\cP'_{k_0}$. Then we go to the coefficient of~$\eps^{k_0+1}$ and so on. The required Miura transformation from the group $\Mi_{\d_x}\cap\Mi_\norm$ is then the composition of the constructed Miura transformations.
\end{proof}

\medskip

The following theorem gives an explicit construction of tau-symmetric deformations of the Riemann hierarchy in the standard form with arbitrary coefficients $a\ne 0$ and $a_{(2^g)}$, $g\ge 2$, as the DR hierarchy of a family of partial CohFTs. Therefore, if Conjecture~\ref{conjecture:DLYZ} is true, then the theorem implies that the DR hierarchies of partial CohFTs give all standard forms of tau-symmetric deformations of the Riemann hierarchy.

\medskip

\begin{theorem}\label{theorem:DR for rank one partial CohFTs}
Let $G\in\mbC^*$ and $s_1,s_2,s_3,\ldots\in\mbC$. Consider the partial CohFT
\begin{gather}\label{eq:rank 1 partial CohFTs}
c_{g,n}=G^g \exp\lb\sum_{i\ge 1}s_i\Ch_{2i-1}(\mbE)\rb\in H^*(\oM_{g,n}).
\end{gather}
\begin{enumerate}
\item The associated DR hierarchy is a special Hamiltonian deformation of the Riemann hierarchy in the standard form,
\begin{gather}\label{eq:h1 for DR for partial CohFT}
\og_1=\int\left(\frac{u^3}{6}-\frac{\eps^2 G}{24} u_x^2+\sum_{g\ge 2}\eps^{2g}G^g\sum_{\lambda\in\cP_{2g}'}a_\lambda u_\lambda\right)dx,\quad a_\lambda\in\mbC.
\end{gather}

\smallskip

\item For $\lambda\in\cP_{2g}'$, $g\ge 2$, the coefficient $a_\lambda$ is a polynomial in $s_1,\ldots,s_{\lfloor\frac{g-2+l(\lambda)}{2}\rfloor}$ with rational coefficients of degree $g-3+l(\lambda)$, where $\deg s_i:=2i-1$. Moreover, $a_{(2^g)}$ has the form
$$
a_{(2^g)}=(-1)^g\frac{(3g-2)|B_{2g-2}||B_{2g}|}{4g((2g-2)!)^2}s_{g-1}+T_g(s_1,\ldots,s_{g-2}),\quad T_g\in\mbQ[s_1,\ldots,s_{g-2}].
$$
\end{enumerate}
\end{theorem}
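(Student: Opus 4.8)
The plan is to handle the two parts separately, pushing the structural results of Section~\ref{section:preliminaries} as far as they go before entering intersection theory. For Part (1), I would first invoke the fact recalled at the end of Section~\ref{section:CohFTs}: since $\{c_{g,n+1}\}$ comes from a partial CohFT, its DR hierarchy is Hamiltonian with Poisson operator $\d_x$ and Hamiltonians $\og_d$, and is tau-symmetric; by Remark~\ref{remark:about deformations}(3) a tau-symmetric deformation with $K=\d_x$ is automatically special. Evenness is immediate, since only $\eps^{2g}$ occurs in the definition of $g_d$. Being special gives $\frac{\d\og_1}{\d u}=\og_0=\int\frac{u^2}{2}dx$, so the $\eps^{2g}$-part of $\og_1$ with $g\ge1$ is annihilated by $\frac{\d}{\d u}$; Corollary~\ref{corollary:simple}(1) then forces its canonical density (Lemma~\ref{lemma:unique density}) to carry no explicit $u$, i.e. all coefficients are constants and every contributing $\lambda$ has positive parts. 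The only genuinely new content of Part~(1) is thus the vanishing of $a_\lambda$ for $\lambda\in\cP^\circ_{2g}\setminus\cP'_{2g}$ (those with a part equal to $1$) when $g\ge2$, equivalently that the unique normalizing transformation of Theorem~\ref{theorem:generalized standard form} is the identity on the DR hierarchy; I would try to deduce this from a normality property of the DR densities, or directly from a parity vanishing of the relevant $\lambda_g\DR_g$ integrals.

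For the degree statement in Part~(2) I would use the grading of $\hcA_u$ by the number of factors in a monomial $u_{d_1}\cdots u_{d_n}$ (counting every $u_{d_i}$, including $u_0=u$). Since $\d_x$ is a derivation it preserves this grading, so the reduction of $\og_1$ to the canonical form of Lemma~\ref{lemma:unique density} respects it; because the coefficients are constant, the canonical term $a_\lambda u_\lambda$ sits in factor-number $l(\lambda)$ and at genus $g$ receives contributions only from the raw terms with exactly $n=l(\lambda)$ factors, i.e. from integrals over $\oM_{g,l(\lambda)+1}$ (the prefactor $G^g$ simply multiplies the whole genus-$g$ part, matching the $G^g$ in the claim). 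A dimension count on $\oM_{g,n+1}$ — the integrand $\psi_1\lambda_g\DR_g c_{g,n+1}$ has complex degree $1+g+g+m$, with $m$ the degree of the chosen monomial in the $\Ch_{2i-1}(\mbE)$, and this must equal $3g-2+n$ — gives $m=g+n-3$ \emph{independently of} the $d_i$. As $\deg s_i=2i-1$ equals the cohomological degree of $\Ch_{2i-1}(\mbE)$, every such intersection number is $s$-homogeneous of degree $g+n-3$, so $a_\lambda$ is homogeneous of degree $g-3+l(\lambda)$, and the bound on the highest variable follows from $2i-1\le g-3+l(\lambda)$.

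To pin down $a_{(2^g)}$ I would first note that $(2^g)$ is the \emph{unique} element of $\cP^\circ_{2g}$ of length $g$, so the factor-number-$g$ part of $\og_1$ reduces modulo $\Im(\d_x)$ to exactly $a_{(2^g)}u_2^g$. I would then evaluate this identity on the multi-wave field $u=\sum_{a=1}^g e^{p_ax}$ and extract the coefficient of $e^{(p_1+\cdots+p_g)x}$ multilinear in the $e^{p_ax}$: on such a field each monomial $\prod u_{d_i}$ with $\sum d_i=2g$ yields $\sum_{\sigma\in S_g}\prod_i p_{\sigma(i)}^{d_i}$, and an element of $\Im(\d_x)$ contributes a factor $p_1+\cdots+p_g$. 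Restricting to $\sum_a p_a=0$ removes the $\Im(\d_x)$-ambiguity, and summing the $\vec d$-contributions reconstitutes the DR class evaluated at $a_i=p_i$, so the multilinear coefficient equals $\int_{\oM_{g,g+1}}\psi_1\lambda_g\DR_g(0,p_1,\ldots,p_g)\Ch_{2g-3}(\mbE)$. Since the first point now carries multiplicity $0$, one has $\DR_g(0,\vec p)=\pi^*\DR_g(\vec p)$ while $\lambda_g$ and $\Ch_{2g-3}(\mbE)$ are pulled back as well, so the projection formula with $\pi_*\psi_1=\kappa_0=3g-2$ reduces everything to $\frac{3g-2}{g!}$ times the coefficient of $\prod_a p_a^2$ in $\int_{\oM_{g,g}}\lambda_g\Ch_{2g-3}(\mbE)\DR_g(\vec p)$.

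The final step is the evaluation of this Hodge-type integral. I would apply Mumford's Grothendieck--Riemann--Roch formula to write $\Ch_{2g-3}(\mbE)=\frac{B_{2g-2}}{(2g-2)!}\bigl(\kappa_{2g-3}-\sum_i\psi_i^{2g-3}+\tfrac12(\text{boundary})\bigr)$, which already supplies the factor $|B_{2g-2}|/(2g-2)!$, together with Hain's formula $\lambda_g\DR_g=\lambda_g\eta^g/g!$ (legitimate after the factor $\lambda_g$, which annihilates the non-separating boundary supporting Pixton's corrections), where $\eta$ is quadratic in the $p_a$. Taking the coefficient of $\prod_a p_a^2$ then reduces the problem to $\lambda_g$-Hodge integrals on $\oM_{g,g}$ and its separating boundary strata, which I would compute by the $\lambda_g$-formula $\int_{\oM_{g,n}}\lambda_g\psi_1^{b_1}\cdots\psi_n^{b_n}=\binom{2g-3+n}{b_1,\ldots,b_n}\int_{\oM_{g,1}}\lambda_g\psi_1^{2g-2}$ and Faber-type evaluations, which is where $|B_{2g}|$ enters; the contributions of $s_1,\ldots,s_{g-2}$ are collected into $T_g$. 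I expect the main obstacle to be twofold: establishing the $\cP'$-vanishing of Part~(1), and, in Part~(2), the precise bookkeeping of the constant — in particular correctly weighting the separating-boundary contributions of Hain's and Mumford's formulas by $\lambda_g$ so that the two Bernoulli factors assemble into exactly $(-1)^g\frac{(3g-2)|B_{2g-2}||B_{2g}|}{4g((2g-2)!)^2}$.
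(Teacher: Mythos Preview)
Your structural reduction for Part~(1) and your degree count for Part~(2) are correct and in spirit match the paper, which simply cites \cite[Proposition~8.3]{BDGR20} for these. One concrete correction: the vanishing of $a_\lambda$ for $\lambda\in\cP^\circ_{2g}\setminus\cP'_{2g}$ is not a ``parity'' phenomenon. The actual mechanism (used in~\cite{BDGR20}, and visible in this paper's proof of Theorem~\ref{theorem:DR hierarchy for F-CohFT}) is that $\pi_{1*}\DR_g(-\sum a_i,a_1,\ldots,a_m)|_{\cM^{\ct}}$ is divisible by $a_m^2$, which kills every monomial $u_{d_1}\cdots u_{d_n}$ with some $d_i=1$ once one pushes forward along the forgetful map; your ``normality'' intuition does not quite capture this.

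For the evaluation of $a_{(2^g)}$ you take a genuinely different and considerably longer route. Your multi-wave extraction and the projection-formula step $\pi_*\psi_1=\kappa_0=3g-2$ are fine, but the paper does not stop at $\oM_{g,g}$: since the class $\lambda_g\Ch_{2g-3}(\mbE)$ is pulled back from $\oM_g$, one can push forward \emph{all} $g$ points, and the remaining $\DR$-polynomial integrates to a simple combinatorial factor, yielding the closed formula $a_{(2^g)}=(3g-2)\int_{\oM_g}\lambda_g\bigl[\exp(\sum s_i\Ch_{2i-1}(\mbE))\bigr]_{2g-3}$ directly. From there the paper avoids Mumford's GRR, Hain's formula, and the $\lambda_g$-formula entirely: a short symmetric-function computation gives
\[
\lambda_g\,\Ch_{2g-3}(\mbE)=\frac{(-1)^{g-2}}{(2g-3)!}\,\lambda_g\lambda_{g-1}\lambda_{g-2},
\]
and the single Hodge integral $\int_{\oM_g}\lambda_g\lambda_{g-1}\lambda_{g-2}$ is the Faber--Pandharipande number $\frac{|B_{2g-2}||B_{2g}|}{2(2g-2)!(2g-2)(2g)}$, which assembles the constant in one line. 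Your proposed computation via Mumford plus Hain would have to track separating-boundary terms from \emph{both} formulas simultaneously and then sum $\lambda_g$-integrals over all splittings; this is doable in principle, but the bookkeeping you flag as the ``main obstacle'' is exactly what the $\lambda_g\lambda_{g-1}\lambda_{g-2}$ identity sidesteps.
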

\begin{proof}
{\it 1}. For $G=1$, Part 1 was proved in~\cite[Proposition~8.3]{BDGR20}. It is clear that the multiplication of $c_{g,n}$ by $G^g$ corresponds to the rescaling $\eps\mapsto\sqrt{G}\eps$.

\medskip   

{\it 2}. Without loss of generality, we can assume that $G=1$. For $\lambda\in\cP_{2g}'$, $g\ge 2$, $l=l(\lambda)$, we have (see the proof of Proposition~8.3 in~\cite{BDGR20})
\begin{gather}\label{eq:formula for alambda}
a_\lambda=\frac{2g-2+l}{(2g-2)|\Aut(\lambda)|}\Coef_{a_1^{\lambda_1}\cdots a_l^{\lambda_l}}\left[\int_{\oM_{g,1}}\psi_1\lambda_g\pi_*\lb\DR_g(0,a_1,\ldots,a_l)\rb \exp\lb\sum_{i\ge 1}s_i\Ch_{2i-1}(\mbE)\rb\right],
\end{gather}
where $\pi\colon\oM_{g,l+1}\to\oM_{g,1}$ is the forgetful map that forgets the last $l$ marked points. A simple degree counting shows that the class $\exp\lb\sum_{i\ge 1}s_i\Ch_{2i-1}(\mbE)\rb$ in formula~\eqref{eq:formula for alambda} can be replaced by the class $\left[\exp\lb\sum_{i\ge 1}s_i\Ch_{2i-1}(\mbE)\rb\right]_{g-3+l}$, which is clearly a cohomology class on $\oM_{g,1}$ depending polynomially on $s_1,\ldots,s_{\lfloor\frac{g-2+l(\lambda)}{2}\rfloor}$ and with the required homogeneity property.

\medskip 

Regarding the coefficient $a_{(2^g)}$, by~\cite[Proposition~8.3]{BDGR20}, we have
$$
a_{(2^g)}=(3g-2)\int_{\oM_{g}}\lambda_g\left[\exp\lb\sum_{i\ge 1}s_i\Ch_{2i-1}(\mbE)\rb\right]_{2g-3}.
$$
Clearly the class $\left[\exp\lb\sum_{i\ge 1}s_i\Ch_{2i-1}(\mbE)\rb\right]_{2g-3}$ is the sum of the class $s_{g-1}\Ch_{2g-3}(\mbE)$ and a cohomology class depending polynomially on $s_1,\ldots,s_{g-2}$. An elementary computation with symmetric functions shows that
$$
\lambda_g\Ch_{2g-3}(\mbE)=\frac{(-1)^{g-2}}{(2g-3)!}\lambda_g\lambda_{g-1}\lambda_{g-2},
$$
and then, using the formula (see e.g.~\cite[equation~(9)]{FP00})
$$
\int_{\oM_g}\lambda_g\lambda_{g-1}\lambda_{g-2}=\frac{1}{2(2g-2)!}\frac{|B_{2g-2}|}{2g-2}\frac{|B_{2g}|}{2g},
$$
we complete the proof.
\end{proof}

\medskip


\section{Hierarchies of conservation laws and F-CohFTs}\label{section:conservation laws}

\subsection{Arsie--Lorenzoni--Moro conjecture}

Consider a PDE 
\begin{gather}\label{eq:first flow}
\frac{\d u}{\d t}=Q,\quad Q=uu_x+O(\eps)\in\hcA_{u;1}.
\end{gather}
Let us expand
$$
Q=\sum_{k\ge 0}\eps^k\sum_{\lambda\in\cP_{k+1}}b_\lambda(u)u_\lambda,\quad b_\lambda(u)\in\mbC[[u]].
$$
In~\cite{ALM15b}, the authors noticed that the collection of formal power series $b_{k}(u)$, $k\ge 2$, is invariant under Miura transformations. Suppose now that a PDE~\eqref{eq:first flow} is a part of a deformation of the Riemann hierarchy $\frac{\d u}{\d t_d}=Q_d$, $d\ge 0$, $Q_1=Q$. We know that the whole deformation is uniquely determined by $Q_1$. In~\cite[Conjecture~1.1]{ALM15b}, the authors conjectured that the equivalence class of this deformation under Miura transformations is uniquely determined by the formal power series $b_{k}(u)$, $k\ge 2$.  

\medskip

Consider now a PDE of the form 
\begin{gather}\label{eq:first flow,conservation law}
\frac{\d u}{\d t}=\d_x P,\quad P\in\hcA_{u;0}.
\end{gather}
Note that if another PDE $\frac{\d u}{\d s}=Q$ with $Q\in\hcA_{u;1}$ commutes with it, then by Lemma~\ref{lemma:uniqueness} we have $Q=\d_x\tQ$ for some $\tQ\in\hcA_{u;0}$. Clearly the form~\eqref{eq:first flow,conservation law} is preserved under the Miura transformations 
\begin{gather}\label{eq:Miura for conservation laws}
u\mapsto v(u_*,\eps)=u+\d_x f,\quad f\in\hcA_{u;-1}.
\end{gather}
Suppose now that $P=\frac{u^2}{2}+O(\eps)$. By~\cite[Theorem~3.2]{ALM15a}, there exists a unique Miura transformation~\eqref{eq:Miura for conservation laws} such that the transformed equation $\frac{\d v}{\d t_d}=\d_x\tP$ has the form
$$
\tP=\frac{v^2}{2}+\eps a(v)v_x+\sum_{k\ge 2}\eps^k\omega_k(v_*),\quad\text{where}\quad \frac{\d\omega_k}{\d v_x}=0.
$$
The resulting equation is called the \emph{normal form} of the initial equation.

\medskip

\begin{conjecture}[Section 3 in \cite{ALM15b}]\label{ALM-conjecture}
Consider a deformation of the Riemann hierarchy 
\begin{gather}\label{eq:ALM-normal form-1}
\frac{\d u}{\d t_d}=\d_x P_d,\quad P_d=\frac{u^{d+1}}{(d+1)!}+O(\eps)\in\hcA_{u;0},\quad d\ge 0,
\end{gather}
where the flow $\frac{\d}{\d t_1}$ is in the normal form:
\begin{gather}\label{eq:ALM-normal forma-2}
P_1=\frac{u^2}{2}+\eps a(u)v_x+\sum_{k\ge 2}\eps^k\sum_{\lambda\in\cP_k,\,\lambda_i\ge 2}c_\lambda(u)u_\lambda,\quad c_\lambda(u)\in\mbC[[u]].
\end{gather}
Suppose that $a(u)=0$ and $c_2(u)\ne 0$. Then the following is true.
\begin{enumerate}
\item $c_\lambda(u)=0$ for all $\lambda$ with odd $|\lambda|$, and therefore all the deformation is even.

\smallskip

\item The deformation is uniquely determined by the formal power series $c_2(u),c_4(u),\ldots$.
\end{enumerate}
\end{conjecture}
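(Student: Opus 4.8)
The plan is to reduce both statements to an order-by-order study of the single requirement that the normal-form flow $\partial_x P_1$ admit a commuting companion. By Lemma~\ref{lemma:uniqueness} the flow $\partial_x P_2$ with $P_2=\frac{u^3}{6}+O(\eps)$ is uniquely determined by $\partial_x P_1$, and by the observation following~\eqref{eq:first flow,conservation law} it is automatically of conservation-law type, while the integrability built into the notion of a deformation guarantees $[D_{\partial_x P_1},D_{\partial_x P_2}]=0$. Expanding this identity and extracting the coefficient of $\eps^k$ yields, at each order, a linear system for the unknowns $c_\lambda(u)$ with $|\lambda|=k$; its leading part is controlled by $D_{uu_x}(u_\lambda)$ and is therefore made explicit by Lemma~\ref{lemma:differentiating ulambda}.

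For the uniqueness in Part~2 I would induct on $k=|\lambda|$, granting evenness. At order $\eps^k$ the single-row coefficient $c_{(k)}(u)$ is the free functional datum, while the commutativity system should express every multi-row coefficient $c_\lambda(u)$ with $l(\lambda)\ge 2$ as an explicit differential-polynomial combination of $c_{(k)}(u)$ and of lower-order coefficients, which by the inductive hypothesis are functions of $c_{(2)}(u),c_{(4)}(u),\dots$ alone. It then remains to verify that the conservation-law Miura transformations~\eqref{eq:Miura for conservation laws} still preserving the normalization introduce no further freedom, which is exactly the normal-form uniqueness of~\cite[Theorem~3.2]{ALM15a}.

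For the evenness in Part~1 the base case is the hypothesis $a(u)=0$, which removes the order-$\eps^1$ term. The inductive step would show that, once every odd-order term below order $\eps^{2m+1}$ vanishes, the inhomogeneity of the order-$\eps^{2m+1}$ system is purely even and forces its normal-form solution to vanish. The guiding heuristic is the reflection $x\mapsto -x$, $t_d\mapsto -t_d$, which preserves the Riemann hierarchy and carries $P_1$ to its image under $u_j\mapsto(-1)^j u_j$; since $|\lambda|$ coincides with the $\eps$-power in the normal form, this image is exactly the deformation with all odd-order coefficients reversed in sign, so evenness amounts to invariance of the normal form under this reflection, which is not automatic and is the crux of the matter.

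The main obstacle, in both parts, is to prove that the linear systems coming from commutativity are nondegenerate in the relevant directions — equivalently, that the deformation cohomology governing conservation-law deformations of the Riemann hierarchy vanishes in the odd bidegrees (Part~1) and in the multi-row directions transverse to the single-row coefficients (Part~2). No closed-form control of this cohomology is available, and this is precisely what keeps the statement a conjecture rather than a theorem. Lacking such a vanishing result, the rest of the paper instead confirms the conjecture on the explicit families arising as DR hierarchies of rank one F-CohFTs (Theorem~\ref{theorem:DR hierarchy for F-CohFT}) and extracts the conditional implication that the conjecture entails the main part of the DLYZ conjecture~(Conjecture~\ref{conjecture:DLYZ}).
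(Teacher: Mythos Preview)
The statement is labeled a conjecture because it is one: the paper does not prove it, merely records it from \cite{ALM15b} and uses it conditionally. You correctly recognize this in your final paragraph, so in that sense your proposal is accurate. The sketch you give of how a proof might go --- reduce to the commutativity of $\d_x P_1$ with $\d_x P_2$, expand order by order, and show the resulting linear systems are nondegenerate --- is a reasonable description of the obstacle, and your identification of the missing ingredient (control of the relevant deformation cohomology) is exactly right. Since the paper contains no proof to compare against, there is nothing further to check; your proposal is not a proof, and you say so.
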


\medskip

\begin{remark}
Actually, the authors of~\cite{ALM15b} conjectured a stronger result. Consider a deformation of the Riemann hierarchy $\frac{\d u}{\d t_d}=\d_x P_d$ with
$$
P_1=\frac{u^2}{2}+\sum_{k\ge 2}\eps^k\sum_{\lambda\in\cP_k,\,\lambda_i\ge 2}c_\lambda(u)u_\lambda.
$$ 
We know that the whole deformation is uniquely determined by $P_1$. The authors of~\cite{ALM15b} showed that the existence of a deformation with a given $P_1$ is equivalent to an infinite system of polynomial equations for the coefficients $c_\lambda(u)$ and their $u$-derivatives $c_{\lambda}^{(n)}(u)$. The authors of~\cite{ALM15b} conjectured that that if $c_2(u)\ne 0$, then this system can be solved with $c_2(u),c_4(u),\ldots$ being independent functional parameters, where any other coefficient $c_{\lambda}(u)$ can be expressed as a polynomial in $c^{(l)}_{2g}(u)$, $g\ge 1$, $l\ge 0,$ and~$c_2(u)^{-1}$. In particular, analyzing a part of the infinite system, the authors of~\cite{ALM15b} found the following formula for $c_{2,2}$:
$$
c_{2,2}=\frac{1}{144}\frac{1}{c_2^2}\left[334c_2^3c_2^{(2)}-168c_2^2(c_2^{(1)})^2+335 c_4 c_2^{(1)}-330c_2^2c_4^{(1)}+280c_2c_6-400c_4^2\right].
$$ 
\end{remark}

\medskip

\begin{remark}
Comparing Conjectures~\ref{conjecture:DLYZ} and~\ref{ALM-conjecture}, one can note that in Conjecture~\ref{ALM-conjecture} there is no analog of Part~2 of Conjecture~\ref{conjecture:DLYZ}. One could expect that if $a(u)=c_2(u)=0$, then $c_\lambda(u)=0$ for all $\lambda$. However, the paper~\cite{ALM15b} doesn't contain such a claim. Additionally, on the conference ``New Trends in Moduli, Integrability and Deformations'' (9--12 June 2025, Padova, Italy) P. Lorenzoni informed us that they don't have enough evidence to claim this.  
\end{remark}

\medskip

\subsection{The DR hierarchies of F-CohFTs in the framework of the Arsie--Lorenzoni--Moro conjecture}

The following theorem gives an explicit construction of deformations of the Riemann hierarchy~\eqref{eq:ALM-normal form-1} where the flow $\frac{\d}{\d t_1}$ is in the normal form~\eqref{eq:ALM-normal forma-2} with $a(u)=0$, vanishing $c_\lambda(u)$ for odd $|\lambda|$, and arbitrary constant coefficients $c_2\ne 0,c_4,c_6,\ldots$, as the DR hierarchies of a family of F-CohFTs. Therefore, if Conjecture~\ref{ALM-conjecture} is true, then the theorem implies that all normal forms of deformations of the Riemann hierarchy~\eqref{eq:ALM-normal form-1} with $a(u)=0$ and constant coefficients $c_2\ne 0,c_4,c_6,\ldots$ are obtained as DR hierarchies of F-CohFTs.

\medskip

\begin{theorem}\label{theorem:DR hierarchy for F-CohFT}
Let $G\in\mbC^*$ and $R(z)=\exp\lb\sum_{i\ge 1}r_i z^i\rb$, $r_i\in\mbC$. Consider the DR hierarchy corresponding to the F-CohFT $\{R.c^{\triv,G}_{g,n+1}\}$:
$$
\frac{\d u}{\d t_d}=\d_x P_d,\quad d\ge 0.
$$
\begin{enumerate}
\item The flow $\frac{\d}{\d t_1}$ is in the normal form and, moreover,
$$
P_1=\frac{u^2}{2}+\sum_{g\ge 1}\eps^{2g}G^g\sum_{\lambda\in\cP_{2g},\,\lambda_i\ge 2}c_\lambda u_\lambda,
$$
where $c_\lambda$ is a polynomial in $r_1,\ldots,r_{g+l(\lambda)-2}$ with rational coefficients of degree $g+l(\lambda)-2$, where $\deg r_i:=i$.

\smallskip

\item The coefficients $c_{2g}$, $c_{(2^g)}$, and $c_{(4,2^{g-2})}$ have the form
\begin{align*}
& c_{2g}=
\begin{cases}
\frac{1}{12},&\text{if $g=1$},\\
\alpha_g r_{g-1}+\tT_{2g}(r_1,\ldots,r_{g-2}),&\text{if $g\ge 2$},
\end{cases}\\
&c_{(2^g)}=\beta_g r_{2g-2}+\tT_{(2^g)}(r_1,\ldots,r_{2g-3}),\quad c_{(4,2^{g-2})}=\gamma_g r_{2g-3}+\tT_{(4,2^{g-2})}(r_1,\ldots,r_{2g-4}),\quad g\ge 2,
\end{align*}
where 
\begin{align*}
\alpha_g=\frac{2g}{4^g}\frac{1}{(2g+1)!!},\qquad \beta_g=(3g-1)(2g-1)\frac{|B_{2g}|}{(2g)!},\qquad \gamma_g=\frac{(3g-2)|B_{2g}|}{8(2g-3)!}, 
\end{align*}
and $\tT_{2g},\tT_{(2^g)}$, $\tT_{(4,2^{g-2})}$ are some polynomials with rational coefficients.
\end{enumerate}
\end{theorem}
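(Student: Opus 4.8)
The plan is to read off $P_1$ directly from the DR formula with $d=1$, using the manifest $\eps$-grading together with the Givental stable-tree presentation of $c_{g,n+2}=R.c^{\triv,G}_{g,n+2}$ from \cite{ABLR21,ABLR23}. First I would dispose of $G$: since $c^{\triv,G}_{g,n}=G^g c^{\triv,1}_{g,n}$ and every summand of the DR density for genus $g$ carries $\eps^{2g}$, multiplication by $G^g$ is exactly the rescaling $\eps\mapsto\sqrt{G}\,\eps$, which accounts for the overall $G^g$ in front of the genus-$g$ coefficients; so I may assume $G=1$. The evenness of $P_1$ in $\eps$ is then automatic from the factor $\eps^{2g}$ in the DR formula, and in particular the term $\eps\,a(u)u_x$ allowed by the general normal form \eqref{eq:ALM-normal forma-2} is absent, so $a(u)=0$.

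The remaining part of Part 1 is that $P_1$ is \emph{already} in the normal form with \emph{constant} coefficients supported on partitions with all parts $\ge 2$. Writing the DR monomials as $u_{d_1}\cdots u_{d_n}$, a factor $u=u_0$ comes precisely from an index $d_i=0$ and a factor $u_x=u_1$ from $d_i=1$; I must show that the coefficient of every such monomial vanishes. This simultaneously yields the absence of $u_x$ (required for \eqref{eq:ALM-normal forma-2}, i.e. that no Miura transformation \eqref{eq:Miura for conservation laws} is needed) and the constancy of the $c_\lambda$, since $u$-dependence could enter only through $u_0$-factors. For the $d_i=0$ contributions I would use the behaviour of the double ramification cycle when one multiplicity is set to zero, namely that $\DR_g(\ldots,a_i=0,\ldots)$ is a pullback along the map forgetting the $i$-th point, combined with $\pi^*c_{g,n+1}=c_{g,n+2}$ and $\pi^*\lambda_g=\lambda_g$; the residual $\psi_2$ is then handled by the comparison $\psi_2=\pi^*\psi_2+D$ and the projection formula. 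The vanishing of the $d_i=1$ contributions is subtler and does not follow from the parity $\DR_g(-\mathbf a)=\DR_g(\mathbf a)$ alone; I would obtain it from a dedicated analysis of the linear-in-$a_i$ intersection numbers (or from the known normalization of DR densities).

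For the polynomiality and degree in the $r_i$, fix $\lambda\in\cP_{2g}$ with $\lambda_i\ge 2$ and $l=l(\lambda)$; the corresponding coefficient comes from $\oM_{g,l+2}$ by extracting $\Coef_{a_1^{\lambda_1}\cdots a_l^{\lambda_l}}$. Since $\dim_{\mbC}\oM_{g,l+2}=3g-1+l$ and $\deg\psi_2=1$, $\deg\lambda_g=g$, $\deg\DR_g=g$, only the degree-$(g+l-2)$ part $[c_{g,l+2}]_{g+l-2}$ contributes. In the Givental stable-tree formula the series variable of $R(z)=\exp(\sum r_i z^i)$ is evaluated on $\psi$-classes, so each factor $r_i$ carries cohomological degree $i$; hence in $[c_{g,l+2}]_{g+l-2}$ every monomial $\prod_j r_{i_j}$ satisfies $\sum_j i_j=g+l-2$. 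This is precisely homogeneity of degree $g+l-2$ for $\deg r_i=i$ and forces only $r_1,\ldots,r_{g+l-2}$ to appear, which gives the degree statement of Part 1.

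For Part 2 I would isolate the term linear in the top variable ($r_{g-1}$ for $c_{2g}$, $r_{2g-2}$ for $c_{(2^g)}$, $r_{2g-3}$ for $c_{(4,2^{g-2})}$), at which order the tree sum collapses to a single $R$-insertion and the leading coefficient becomes one explicit intersection number on $\oM_{g,l+2}$. I would evaluate it by (i) Hain's formula for the top-$a$-degree part of $\DR_g$ to perform the $\Coef$-extraction, (ii) the forgetful pushforward to $\oM_{g,1}$ or $\oM_g$, and (iii) the Hodge-integral evaluations already used in the proof of Theorem~\ref{theorem:DR for rank one partial CohFTs}, namely the $\lambda_g\lambda_{g-1}$- and $\lambda_g\lambda_{g-1}\lambda_{g-2}$-integrals \cite{FP00,BDGR20} (which produce $|B_{2g}|$ in $\beta_g$ and $\gamma_g$) together with pure $\psi$-integrals on $\oM_{g,n}$ (which produce the double factorial in $\alpha_g$). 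The main obstacle I anticipate is twofold: establishing that $P_1$ carries no $u_x$, and pinning down the exact rational constants $\alpha_g,\beta_g,\gamma_g$ by correctly tracking all combinatorial prefactors (the $\tfrac1{n!}$, $|\Aut\lambda|$, and the $\psi_2$-insertion) through to these Hodge integrals; I expect the latter bookkeeping to be the heart of the computation.
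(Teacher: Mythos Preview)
Your overall architecture for Part~1 (reduce to $G=1$, degree count for the $r_i$-dependence) matches the paper, but the crucial vanishing step is not correctly identified. You split into $d_i=0$ and $d_i=1$ and leave the latter as a ``dedicated analysis''; the paper does \emph{not} treat these separately. Instead it first integrates out the marked point carrying $\psi_2$ (the point with weight $0$), using that $\DR_g(-\sum a_i,0,a_1,\ldots,a_n)$ is the pullback along the map forgetting that point together with $\pi^*c_{g,n+1}=c_{g,n+2}$ and the dilaton relation $\pi_*\psi_2=2g-1+n$; this replaces $\psi_2$ by the scalar $(2g-1+n)$ on $\oM_{g,n+1}$. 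It then pushes further down to $\oM_{g,1}$ and invokes \cite[Lemma~5.1]{BDGR18}: on $\cM_{g,m}^\ct$ the class $\pi_{1*}\DR_g(-\sum a_i,a_1,\ldots,a_m)$ is divisible by $a_m^2$. This single lemma kills both $d_i=0$ and $d_i=1$ at once and is the missing key ingredient in your proposal; without it your plan for $d_i=1$ is a genuine gap.

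For Part~2 your outline is plausible but diverges from the paper in two places. For $\alpha_g$ the paper does not use Hain's formula plus pure $\psi$-integrals; it proves a dedicated generating-function identity for $\sum_g\int_{\oM_{g,3}}\DR_g(-a-b,a,b)\psi_1^g\lambda_g$ by a recursion coming from \cite{BSSZ15}, and the double factorial drops out of the resulting ODE. For $\gamma_g$ the paper avoids a direct computation entirely: setting $r_{2i}=0$ turns the F-CohFT into the partial CohFT of Theorem~\ref{theorem:DR for rank one partial CohFTs} via the substitution $s_i=r_{2i-1}(2i)!/B_{2i}$, so $c_{(4,2^{g-2})}=g(g-1)a_{(2^g)}$ and the value of $\gamma_g$ is read off from the already-computed coefficient of $s_{g-1}$ in $a_{(2^g)}$. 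Your direct approach via $\lambda_g\lambda_{g-1}\lambda_{g-2}$ would ultimately reproduce this, but the paper's reduction is what makes the bookkeeping you flagged as ``the heart of the computation'' essentially free for $\gamma_g$. The computation of $\beta_g$ is closest to what you describe: it lands on $\int_{\oM_{g,1}}\lambda_g\theta_{g,1,2g-2}$, evaluated via $\int_{\oM_{h,1}}\lambda_h\psi_1^{2h-2}$ from \cite{FP03} and a Bernoulli convolution identity from \cite{Bur15}.
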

\begin{proof}
It is clear that $R.c^{\triv,G}_{g,n+1}=G^g R.c^{\triv,1}_{g,n+1}$, and so on the level of the DR hierarchy going from the F-CohFT $\{R.c^{\triv,1}_{g,n+1}\}$ to the F-CohFT $\{R.c^{\triv,G}_{g,n+1}\}$ corresponds to the rescaling $\eps\mapsto\sqrt{G}\eps$. Therefore, without loss of generality we can assume that $G=1$.

\medskip

{\it 1}. The proof of the fact that the flow $\frac{\d}{\d t_1}$ is in the normal form is similar to the proof of~\cite[Proposition~8.3]{BDGR20}. We have
\begin{gather*}
P_1:=\sum_{g,n\ge 0}\frac{\eps^{2g}}{n!}\sum_{\substack{\od=(d_1,\ldots,d_n)\in\mbZ_{\ge 0}^n\\d_1+\ldots+d_n=2g}}\underbrace{\Coef_{a_1^{d_1}\cdots a_n^{d_n}}\lb\int_{\oM_{g,n+2}}\hspace{-0.3cm}\psi_2\lambda_g\DR_g\lb-\sum a_i,0,a_1,\ldots,a_n\rb R.c^{\triv,1}_{g,n+2}\rb}_{c_{\od}:=} u_{\od},
\end{gather*}
where $u_{\od}:=\prod_{i=1}^n u_{d_i}$, and 
\begin{align*}
&\int_{\oM_{g,n+2}}\psi_2\lambda_g\DR_g\lb-\sum a_i,0,a_1,\ldots,a_n\rb R.c^{\triv,1}_{g,n+2}=\\
&\hspace{2cm}=(2g-1+n)\int_{\oM_{g,n+1}}\lambda_g\DR_g\lb-\sum a_i,a_1,\ldots,a_n\rb R.c^{\triv,1}_{g,n+1}=\\
&\hspace{2cm}=\begin{cases}
\delta_{n,2},&\text{if $g=0$},\\
(2g-1+n)\int_{\oM_{g,1}}\lambda_g\pi_{n*}\lb\DR_g\lb-\sum a_i,a_1,\ldots,a_n\rb\rb R.c^{\triv,1}_{g,1},&\text{if $g\ge 1$},
\end{cases}
\end{align*}
where $\pi_k\colon\oM_{g,k+l}\to\oM_{g,l}$ is the map that forgets the last $k$ marked points. Consider the case $g\ge 1$. Since the polynomial class
$$
\left.\pi_{1*}\lb\DR_g\lb-\sum a_i,a_1,\ldots,a_m\rb\rb\right|_{\cM_{g,m}^\ct}\in H^{2g-2}(\cM_{g,m}^\ct,\mbQ),\quad m\ge 1,
$$
is divisible by $a_m^2$ \cite[Lemma~5.1]{BDGR18}, we conclude that $c_{\od}=0$ unless $d_1,\ldots,d_n\ge 2$. Moreover, a simple degree counting gives that 
\begin{multline*}
\int_{\oM_{g,1}}\lambda_g\pi_{n*}\lb\DR_g\lb-\sum a_i,a_1,\ldots,a_n\rb\rb R.c^{\triv,1}_{g,1}=\\
=\int_{\oM_{g,1}}\lambda_g\pi_{n*}\lb\DR_g\lb-\sum a_i,a_1,\ldots,a_n\rb\rb\left[R.c^{\triv,1}_{g,1}\right]_{g+n-2},
\end{multline*}
and it remains to note that $\left[R.c^{\triv,1}_{g,n}\right]_k$ is a cohomology class on $\oM_{g,n}$ depending polynomially on $r_1,\ldots,r_k$ and with the required homogeneity property.

\medskip

{\it 2}. Using the formulas for the $R$-matrix action on F-CohFTs from~\cite[Section~4]{ABLR23}, we compute, for any $l\ge 1$,
\begin{gather*}
\left.\left[R.c^{\triv,1}_{g,n+1}\right]_l\right|_{r_1=r_2=\ldots=r_{l-1}=0}=r_l\Bigg(\kappa_l+(-1)^l\psi_1^l-\sum_{i=2}^{n+1}\psi_i^l+\sum_{\substack{i+j=l-1\\g_1+g_2=g\\I\sqcup J=[n+1]\backslash\{1\}}}(-1)^j\tikz[baseline=-1mm]{\coordinate (A) at (0,0);\coordinate (B) at (12mm,0);\draw (A)--(B);\legm{A}{180}{1};\leg{A}{-60};\leg{A}{-120};\leg{B}{-60};\leg{B}{-120};\gg{g_1}{A};\gg{g_2}{B};\lab{A}{28}{4.4mm}{\psi^i};\lab{B}{146}{4mm}{\psi^j};\lab{A}{-90}{5.9mm}{...};\lab{A}{-90}{7mm}{\underbrace{\phantom{aaa}}_{\text{$I$}}};\lab{B}{-90}{5.9mm}{...};\lab{B}{-90}{7mm}{\underbrace{\phantom{aaa}}_{\text{$J$}}};}\Bigg),
\end{gather*}
where we used the standard graphical notation for tautological cohomology classes on $\oM_{g,n}$, see e.g.~\cite[Section~2.1]{BGR19}. Let us denote the class in the brackets by $\theta_{g,n+1,l}$. 

\medskip

Regarding the coefficient $c_{2g}$, we have
\begin{align*}
c_{2g}=&\Coef_{a^{2g}}\int_{\oM_{g,3}}\lambda_g\DR_g(a,0,-a)\psi_2\left[R.c^{\triv,1}_{g,3}\right]_{g-1}=\\
=&2g\,\Coef_{a^{2g}}\int_{\oM_{g,2}}\lambda_g\DR_g(a,-a)\left[R.c^{\triv,1}_{g,2}\right]_{g-1}.
\end{align*}
Since 
\begin{gather}\label{eq:lambda1-DR1}
\int_{\oM_{1,2}}\lambda_1\DR_1(a,-a)=\frac{a^2}{24},
\end{gather}
we obtain $c_2=\frac{1}{12}$.

\medskip

\underline{Let us compute $\alpha_g$}, $g\ge 2$. We have
$$
\alpha_g=2g\,\Coef_{a^{2g}}\int_{\oM_{g,2}}\lambda_g\DR_g(a,-a)\theta_{g,2,g-1}.
$$

\medskip

\begin{lemma}
We have
$$
\sum_{g\ge 0}\int_{\oM_{g,3}}\DR_g(-a-b,a,b)\psi_1^g\lambda_g=e^{\frac{(a+b)^2}{24}}\sum_{n\ge 0}\frac{(-1)^n}{(2n+1)!!}\left(\frac{ab}{4}\right)^n.
$$
\end{lemma}
\begin{proof}
The proof is based on \cite[Theorem~4]{BSSZ15}. First, using this theorem and formula~\eqref{eq:lambda1-DR1}, by induction we obtain
$$
\int_{\oM_{g,2}}\psi_1^{g-1}\lambda_g\DR_g(a,-a)=\frac{a^{2g}}{24^g g!},\quad g\ge 1.
$$
Then denote
$$
P_g(a,b):=\int_{\oM_{g,3}}\DR_g(-a-b,a,b)\psi_1^g\lambda_g,\quad g\ge 0.
$$
Using \cite[Theorem~4]{BSSZ15}, we obtain the following recursion:
$$
P_g(a,b)=\frac{1}{2g+1}\frac{(a+b)^{2g}}{24^g g!}+\frac{a^2-ab+b^2}{12(2g+1)}P_{g-1}(a,b),\quad g\ge 1,
$$
with $P_0(a,b)=1$. This recursion can be rewritten as the following equality for the generating series $F:=\sum_{g\ge 0}P_g(a,b)z^{2g}$:
$$
\frac{\d}{\d z}(z F)=e^{\frac{z^2(a+b)^2}{24}}+z^2\frac{a^2-ab+b^2}{12}F.
$$
Introducing $\tF:=e^{-\frac{z^2(a+b)^2}{24}}F$, we obtain the following equality:
$$
\frac{\d}{\d z}(z\tF)=1-z^2\frac{ab}{4}\tF,
$$
which immediately implies that 
$$
\tF=\sum_{n\ge 0}\frac{(-1)^n}{(2n+1)!!}z^{2n}\left(\frac{ab}{4}\right)^n.
$$
\end{proof}

\medskip

Using this lemma, we compute
\begin{align*}
&2g\,\Coef_{a^{2g}}\int_{\oM_{g,2}}\lambda_g\DR_g(a,-a)\kappa_{g-1}=2g\,\Coef_{a^{2g}}\int_{\oM_{g,1}}\lambda_g\psi_1^g\DR_g(0,a,-a)=\frac{2g}{4^g}\frac{1}{(2g+1)!!},\\
&\Coef_{a^{2g}}\int_{\oM_{g,2}}\lambda_g\DR_g(a,-a)\Bigg((-1)^{g-1}\psi_1^{g-1}-\psi_2^{g-1}+\sum_{\substack{i+j=g-2\\g_1+g_2=g}}(-1)^j\tikz[baseline=-1mm]{\coordinate (A) at (0,0);\coordinate (B) at (12mm,0);\draw (A)--(B);\legm{A}{180}{1};\legm{B}{0}{2};\gg{g_1}{A};\gg{g_2}{B};\lab{A}{28}{4.4mm}{\psi^i};\lab{B}{146}{4mm}{\psi^j};}\Bigg)=\\
&\hspace{7.75cm}=\frac{1}{24^g}\lb\frac{(-1)^{g-1}}{g!}-\frac{1}{g!}+\sum_{\substack{i+j=g\\i,j\ge 1}}\frac{(-1)^{j+1}}{i!j!}\rb=0.
\end{align*}
Note that the class $\theta_{g,2,g-1}$ also contains the sum
$$
\sum_{\substack{i+j=g-2\\g_1+g_2=g}}(-1)^j\tikz[baseline=-1mm]{\coordinate (A) at (0,0);\coordinate (B) at (12mm,0);\draw (A)--(B);\legm{A}{180}{1};\legm{A}{-90}{2};\gg{g_1}{A};\gg{g_2}{B};\lab{A}{28}{4.4mm}{\psi^i};\lab{B}{146}{4mm}{\psi^j};},
$$
which however vanishes when we multiply it by $\lambda_g\DR_g(a,-a)$. The required formula for $\alpha_g$ is proved.

\medskip

\underline{Let us compute $\beta_g$}, $g\ge 2$. We have
\begin{align*}
c_{(2^g)}=&\frac{1}{g!}\Coef_{a_1^2\cdots a_g^2}\int_{\oM_{g,g+2}}\DR_g\lb-\sum a_i,0,a_1,\ldots,a_g\rb\lambda_g\psi_2\left[R.c^{\triv,1}_{g,g+2}\right]_{2g-2}=\\
=&\frac{3g-1}{g!}\Coef_{a_1^2\cdots a_g^2}\int_{\oM_{g,g+1}}\DR_g\lb-\sum a_i,a_1,\ldots,a_g\rb\lambda_g\left[R.c^{\triv,1}_{g,g+1}\right]_{2g-2}=\\
=&(3g-1)\int_{\oM_{g,1}}\lambda_g\left[R.c^{\triv,1}_{g,1}\right]_{2g-2},
\end{align*}
and therefore
\begin{align*}
\beta_g=&(3g-1)\int_{\oM_{g,1}}\lambda_g\theta_{g,1,2g-2}=\\
=&(3g-1)\int_{\oM_{g,1}}\lambda_g\Bigg(\kappa_{2g-2}+\psi_1^{2g-2}+\sum_{\substack{i+j=2g-3\\g_1+g_2=g}}(-1)^j\tikz[baseline=-1mm]{\coordinate (A) at (0,0);\coordinate (B) at (12mm,0);\draw (A)--(B);\legm{A}{180}{1};\gg{g_1}{A};\gg{g_2}{B};\lab{A}{28}{4.4mm}{\psi^i};\lab{B}{146}{4mm}{\psi^j};}\Bigg)=\\
=&(3g-1)\sum_{\substack{g_1+g_2=g\\g_1,g_2\ge 0}}b_{g_1}b_{g_2},
\end{align*}
where $b_h:=\begin{cases}1,&\text{if $h=0$},\\\int_{\oM_{h,1}}\lambda_h\psi_1^{2h-2},&\text{if $h\ge 1$}.\end{cases}$. We have $b_h=\frac{2^{2h-1}-1}{2^{2h-1}}\frac{|B_{2h}|}{(2h)!}$ for $h\ge 1$ \cite[Theorem~1]{FP03}. Using the identity $\sum_{\substack{g_1+g_2=g\\g_1,g_2\ge 0}}b_{g_1}b_{g_2}=\frac{(2g-1)|B_{2g}|}{(2g)!}$ \cite[Remark~5.1]{Bur15}, we complete the proof of the required formula for~$\beta_g$.

\medskip

\underline{Let us compute $\gamma_g$}, $g\ge 2$. Since we are interested in the coefficient of $r_{2g-3}$ in the polynomial~$c_{(4,2^{g-2})}$, we can set $r_{2i}=0$ for all $i\ge 1$. Then (see, e.g., \cite[Section~1.1]{FP00-Hodge-and-Gromov})
$$
\lambda_g R.c^{\triv,1}_{g,n+1}=\lambda_g\exp\lb\sum_{i\ge 1}s_i\Ch_{2i-1}(\mbE)\rb,
$$
where $s_i=r_{2i-1}\frac{(2i)!}{B_{2i}}$. So our hierarchy is exactly the hierarchy considered in Theorem~\ref{theorem:DR for rank one partial CohFTs}. We have $c_{(4,2^{g-2})}=g(g-1)a_{(2^g)}$, and using Part~2 of this theorem we obtain
$$
\gamma_g=g(g-1)\frac{(2g-2)!}{B_{2g-2}}(-1)^g\frac{(3g-2)|B_{2g-2}||B_{2g}|}{4g((2g-2)!)^2}=\frac{(3g-2)|B_{2g}|}{8(2g-3)!},
$$ 
as required.
\end{proof}

\medskip

\begin{remark}\label{remark about ALM-conjecture}{\ }
\begin{enumerate}
\item We know that the DR hierarchy of the F-CohFT $\{R.c^{\triv,G}_{g,n+1}\}$ possesses a family of conserved quantities of the form 
$$
\og_d=\int\left(\frac{u^{d+2}}{(d+2)!}+O(\eps)\right)\in\hLambda_{u;0},\quad d\ge 0.
$$
We come to the following remarkable observation: assuming that Conjecture~\ref{ALM-conjecture} is true, if all the coefficients $c_{2g}(u)$, $g\ge 1$ (in the deformation of the Riemann hierarchy considered in the conjecture) are constants, then the deformation possesses an infinite family of conserved quantities. 

\smallskip

\item Consider the class of hierarchies given by the DR hierarchies of the F-CohFTs $\{R.c^{\triv,G}_{g,n+1}\}$. The theorem implies that this class can be parameterized in two different ways. The first parameterization is given by the numbers $c_{2g}$, $g\ge 1$. The second parameterization is given by the numbers $c_2$, $c_{(2^g)}$, $c_{(4,2^{g-2})}$, $g\ge 2$. The second parameterization has the following property: for any fixed $g_0\ge 1$, the parameters found at the approximation up to~$\eps^{2g_0}$ fully determine the deformation at the same approximation. The first parameterization doesn't have this property: in order to find the deformation at the approximation up to $\eps^{2g_0}$, one has to know the parameters $c_2,c_4,\ldots,c_{2(2g_0-1)}$, which are found at the approximation up to $\eps^{2(2g_0-1)}$.
\end{enumerate}
\end{remark}

\medskip

Consider again the DR hierarchy from Theorem~\ref{theorem:DR hierarchy for F-CohFT}. We know that if $r_{2i}=0$ for all $i\ge 1$, then the F-CohFT $\{R.c^{\triv,G}_{g,n+1}\}$ is associated to a partial CohFT, and therefore the DR hierarchy is actually a Hamiltonian deformation of the Riemann hierarchy. The next proposition describes how to reformulate the vanishing of the paramaters $r_{2i}$ in terms of the equation for the flow~$\frac{\d}{\d t_1}$. 

\begin{proposition}
Consider the DR hierarchy from Theorem~\ref{theorem:DR hierarchy for F-CohFT}. Then we have $c_{(2^h)}=0$ for all $h\ge 2$ if and only if $r_{2g}=0$ for all $g\ge 1$.
\end{proposition}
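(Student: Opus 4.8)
The plan is to prove the ``if'' direction first, by a direct factor-counting argument in the Hamiltonian case, and then to bootstrap it into the ``only if'' direction using the explicit leading term of $c_{(2^h)}$ recorded in Part~2 of Theorem~\ref{theorem:DR hierarchy for F-CohFT}. Throughout I may assume $G=1$, since changing $G$ only rescales $\eps$ and does not affect the vanishing of any coefficient.

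First I would treat the ``if'' direction. If $r_{2i}=0$ for all $i\ge 1$, then $R(z)=\exp\lb\sum_{i\ge 1}r_{2i-1}z^{2i-1}\rb$ is the exponential of an odd series, so $R(z)R(-z)=1$ and the F-CohFT $\{R.c^{\triv,1}_{g,n+1}\}$ comes from the partial CohFT $\{R.c^{\triv,1}_{g,n}\}$. Hence the DR hierarchy is Hamiltonian with Poisson operator $\d_x$ and $P_1=\frac{\delta\og_1}{\delta u}$, where by Theorem~\ref{theorem:DR for rank one partial CohFTs} the density may be represented as
$$
h_1=\frac{u^3}{6}-\frac{\eps^2}{24}u_x^2+\sum_{g\ge 2}\eps^{2g}\sum_{\lambda\in\cP_{2g}'}a_\lambda u_\lambda,
$$
with every partition $\lambda$ in the sum satisfying $\lambda_i\ge 2$. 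Since $c_{(2^h)}$ is the coefficient of $u_2^h$ in the $\eps^{2h}$-part of $P_1=\frac{\delta h_1}{\delta u}$, and since for $h\ge 2$ only the genus-$h$ summand contributes at that order, it suffices to show that no $\lambda\in\cP_{2h}'$ produces the monomial $u_2^h=u_{(2^h)}$. Indeed $\frac{\delta}{\delta u}(u_\lambda)=\sum_n(-\d_x)^n\frac{\d u_\lambda}{\d u_n}$, each term of $\frac{\d u_\lambda}{\d u_n}$ is a monomial with $l(\lambda)-1$ factors, and applying $(-\d_x)^n$ only redistributes and raises derivative indices, so it preserves the number of factors. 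Thus every monomial of $\frac{\delta}{\delta u}(u_\lambda)$ has exactly $l(\lambda)-1$ factors, whereas $u_2^h$ has $h$ factors; the monomial can appear only if $l(\lambda)=h+1$, which is impossible for $\lambda\in\cP_{2h}'$ because $\lambda_i\ge 2$ forces $2h=|\lambda|\ge 2l(\lambda)$, i.e. $l(\lambda)\le h$. Hence $c_{(2^h)}=0$ for all $h\ge 2$.

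The bridge to the ``only if'' direction is the remark that the computation above shows $c_{(2^h)}$ to vanish identically, as a polynomial in the $r_i$, on the locus $\{r_{2i}=0\ \forall i\}$. By Part~2 of Theorem~\ref{theorem:DR hierarchy for F-CohFT} we have $c_{(2^h)}=\beta_h r_{2h-2}+\tT_{(2^h)}(r_1,\ldots,r_{2h-3})$ with $\beta_h\neq 0$; restricting to the even-zero locus (which also sets $r_{2h-2}=0$) gives
$$
\tT_{(2^h)}(r_1,0,r_3,0,\ldots,r_{2h-3})=0
$$
identically in the odd variables. I would then prove that $r_{2g}=0$ for all $g\ge 1$ by induction on $g$, using the hypothesis $c_{(2^h)}=0$. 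At step $g$, write $h=g+1$, so that the hypothesis reads $0=c_{(2^{g+1})}=\beta_{g+1}r_{2g}+\tT_{(2^{g+1})}(r_1,\ldots,r_{2g-1})$. The inductive assumption $r_2=r_4=\cdots=r_{2g-2}=0$ (vacuous for $g=1$) kills all the even arguments of $\tT_{(2^{g+1})}$, and by the bridge observation the resulting value of $\tT_{(2^{g+1})}$ is zero; hence $\beta_{g+1}r_{2g}=0$, and $\beta_{g+1}\neq 0$ forces $r_{2g}=0$.

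The main obstacle is really the ``if'' direction, and specifically securing the structural input that in the Hamiltonian (partial CohFT) case the density $\og_1$ admits a representative supported only on partitions $\lambda$ with all parts $\ge 2$; this is exactly what makes the elementary factor-counting lemma bite, since it is the constraint $l(\lambda)\le h$ that rules out the monomial $u_2^h$. Once that support property is in hand (via Theorem~\ref{theorem:DR for rank one partial CohFTs}), the counting is routine, and the reinterpretation of the ``if'' direction as the vanishing of $\tT_{(2^h)}$ on the even-zero locus is precisely the extra ingredient that, combined with the nonvanishing of $\beta_h$, closes the induction.
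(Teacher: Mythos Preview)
Your proof is correct and follows essentially the same strategy as the paper's. In the ``if'' direction the paper simply asserts that $\Coef_{u_{xx}^g}\lb\frac{\delta}{\delta u}u_\lambda\rb=0$ for $\lambda\in\cP_{2g}'$, while you spell out the factor-counting argument that justifies it; in the ``only if'' direction the paper argues by contrapositive with a minimal $h_0$ for which $r_{2h_0}\ne 0$, whereas you phrase the same step as a forward induction on $g$, but the content is identical.
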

\begin{proof}
Let us prove the ``if'' part. Suppose that $r_{2g}=0$ for all $g\ge 1$. Then our F-CohFT is associated to a partial CohFT, and therefore our hierarchy is Hamiltonian,
$$
P_d=\frac{\delta\og_d}{\delta u},\quad d\ge 0,
$$ 
where $\og_d$ is the Hamiltonian of the DR hierarchy. We know that $\og_1$ has the form~\eqref{eq:h1 for DR for partial CohFT}. It remains to note that for any $g\ge 2$ and $\lambda\in\cP_{2g}'$ we have
$$
\Coef_{u_{xx}^g}\lb\frac{\delta}{\delta u}u_\lambda\rb=0.
$$

\medskip

From the ``if'' part of the proposition and Part 2 of Theorem~\ref{theorem:DR hierarchy for F-CohFT}, it follows that the polynomial $\tT_{(2^g)}(r_1,\ldots,r_{2g-3})$ has the property
$$
\tT_{(2^g)}(r_1,\ldots,r_{2g-3})|_{r_{2i}=0}=0,\quad g\ge 2.
$$

\medskip

Let us now prove the ``only if'' part. Suppose that $r_{2h}\ne 0$ for some $h\ge 1$, and let $h_0$ be the minimal such $h$. By Part~2 of Theorem~\ref{theorem:DR hierarchy for F-CohFT}, we have
$$
c_{(2^{h_0+1})}=\underbrace{\beta_{h_0+1}r_{2h_0}}_{\ne 0}+\tT_{(2^{h_0+1})}(r_1,\ldots,r_{2h_0-1}).
$$
Since $r_2=r_4=\ldots=r_{2h_0-2}=0$, we obtain $\tT_{(2^{h_0+1})}(r_1,\ldots,r_{2h_0-1})=0$ and therefore $c_{(2^{h_0+1})}\ne 0$. This completes the proof of the proposition.
\end{proof}

\medskip

\subsection{A relation between the two conjectures}

\begin{theorem}
Conjecture~\ref{ALM-conjecture} implies Parts~1 and~3 of Conjecture~\ref{conjecture:DLYZ}.
\end{theorem}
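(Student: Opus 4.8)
The plan is to reduce a general tau-symmetric deformation to a situation where the two conjectures can be compared directly. Using Proposition~\ref{proposition:tau-symmetric to special} and Theorem~\ref{theorem:generalized standard form}, I would first replace the given deformation by its (unique) generalized standard form, so that $K=\d_x$ and $\oh_1=\int(\frac{u^3}{6}+\eps^2 a u_x^2+\sum_{k\ge4}\eps^k\sum_{\lambda\in\cP_k'}a_\lambda u_\lambda)dx$; the uniqueness of the normal Miura transformation to this form (Theorem~\ref{theorem:generalized standard form}) already yields the uniqueness assertion in Part~1 of Conjecture~\ref{conjecture:DLYZ}. The crucial point is that the associated system of conservation laws $\frac{\d u}{\d t_d}=\d_x\frac{\delta\oh_d}{\delta u}$ has its first flow $P_1=\frac{\delta\oh_1}{\delta u}$ automatically in the Arsie--Lorenzoni--Moro normal form: since every $\lambda\in\cP_k'$ has all parts $\ge2$, each $\frac{\delta}{\delta u}(u_\lambda)$ is a differential polynomial in $u_2,u_3,\ldots$ with no occurrence of $u_x$, so the function $a(u)$ vanishes and $P_1=\frac{u^2}{2}+\sum_{k\ge2}\eps^k\sum_{\lambda_i\ge2}c_\lambda u_\lambda$ with $c_2=-2a$ constant. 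Hence Conjecture~\ref{ALM-conjecture} applies whenever $a\ne0$.

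For Part~1, assume $a\ne0$, i.e. $c_2\ne0$. By Part~1 of Conjecture~\ref{ALM-conjecture}, $c_\lambda=0$ for all $\lambda$ with odd $|\lambda|$. Because $\frac{\delta}{\delta u}$ preserves the $\eps$-grading, the odd-order part of $\oh_1$ has vanishing variational derivative and is therefore the zero local functional; by the uniqueness of the canonical density (Lemma~\ref{lemma:unique density}), all $a_\lambda$ with odd $|\lambda|$ vanish. Thus the generalized standard form is already the DLYZ standard form, which gives the existence in Part~1. (The case $a=0$ lies outside the hypotheses of Conjecture~\ref{ALM-conjecture} and is exactly the regime governed by the unproven Part~2 of Conjecture~\ref{conjecture:DLYZ}; this is why only the main part of the DLYZ conjecture is obtained.)

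For Part~3 the idea is to match the two parameterizations by a triangular induction. Grading $\frac{\delta}{\delta u}$ by the number of parts (it sends an $l$-part monomial to a sum of $(l-1)$-part monomials), and noting that the only $\lambda\in\cP_{2g}'$ with $l(\lambda)=2$ is $(g,g)$ while the only one with $l(\lambda)=g$ is $(2^g)$, one obtains the two exact relations $c_{2g}=2(-1)^g a_{(g,g)}$ and $c_{(4,2^{g-2})}=g(g-1)a_{(2^g)}$, together with the Hamiltonian vanishing $c_{(2^g)}=0$ (there is no $\lambda\in\cP_{2g}'$ with $l(\lambda)=g+1$). By Part~2 of Conjecture~\ref{ALM-conjecture}, each $c_\lambda$ is a universal polynomial in the free coefficients $c_{2k}$ and $c_2^{-1}$; I would combine this with the relations above to recover all $c_{2k}$ from $a$ and the $a_{(2^g)}$ by induction on $k$. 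Starting from $c_2=-2a$, the step $k=2g-2$ uses $g(g-1)a_{(2^g)}=c_{(4,2^{g-2})}$ to solve for $c_{4g-4}$, and the step $k=2g-1$ uses $0=c_{(2^g)}$ to solve for $c_{4g-2}$. Since every integer $k\ge2$ has exactly one of these two forms, all $c_{2k}$ are determined, and then Conjecture~\ref{ALM-conjecture} pins down the whole deformation, proving Part~3.

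The main obstacle is precisely this inductive step, and it has two ingredients. First one must show that the ALM polynomial expressing $c_{(4,2^{g-2})}$ (resp. $c_{(2^g)}$) involves the variables $c_{2k}$ only up to $k=2g-2$ (resp. $k=2g-1$): unlike ordinary dispersive expansions these polynomials are \emph{not} local in the $\eps$-order and genuinely contain higher $c_{2k}$ divided by powers of $c_2$, so the correct filtration is by the index $k$ rather than by the order. Second, the coefficient of the top variable $c_{4g-4}$ (resp. $c_{4g-2}$) must be nonzero. I expect to establish this nonvanishing by evaluating in the explicit family of Theorem~\ref{theorem:DR hierarchy for F-CohFT}, where these top coefficients come out proportional to $\gamma_g/\alpha_{2g-2}$ and $\beta_g/\alpha_{2g-1}$ with $\alpha_g,\beta_g,\gamma_g\ne0$, while the index boundedness should follow from the weight grading $\deg c_{2k}=k$, $\deg c_2^{-1}=-1$ together with a degree bound on the ALM relations.
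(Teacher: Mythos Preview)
Your approach for Part~1 is correct and matches the paper's: reduce to the generalized standard form, observe that $P_1=\frac{\delta\oh_1}{\delta u}$ is automatically in the ALM normal form with constant coefficients and $c_2=-2a$, and apply Part~1 of Conjecture~\ref{ALM-conjecture} to kill the odd-order terms. The computations $c_{(2^g)}=0$ and $c_{(4,2^{g-2})}=g(g-1)a_{(2^g)}$ are also exactly the ones the paper records.

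For Part~3, however, you take a detour that creates the very ``main obstacle'' you identify. You assume that each $c_\lambda$ is a \emph{universal polynomial} in the $c_{2k}$ and $c_2^{-1}$, but Conjecture~\ref{ALM-conjecture} as stated only asserts \emph{uniqueness}; the polynomial form is the stronger statement in the subsequent Remark and is not part of the hypothesis of the theorem. You then need index bounds (that $c_{(2^g)}$ involves only $c_2,\dots,c_{4g-2}$, etc.) and nonvanishing of top coefficients, neither of which follows from Conjecture~\ref{ALM-conjecture} alone; your proposed ``weight grading'' argument presupposes a homogeneity of the ALM relations that is nowhere supplied. The paper bypasses all of this: since Part~2 of Theorem~\ref{theorem:DR hierarchy for F-CohFT} shows that the DR hierarchies for the F-CohFTs $\{R.c^{\triv,-24a}_{g,n+1}\}$ realize \emph{every} constant sequence $(c_{2k})_{k\ge1}$ with $c_2\ne0$ (the map $(r_k)\mapsto(c_{2k})$ is triangular with nonzero diagonal $\alpha_g$), ALM uniqueness forces the hierarchy in question to \emph{coincide} with one of these DR hierarchies. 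The triangular structure you want then comes for free from the other formulas in Theorem~\ref{theorem:DR hierarchy for F-CohFT}: the map $(r_k)\mapsto(c_2,c_{(4,2^{g-2})},c_{(2^g)})_{g\ge2}$ is triangular with nonzero diagonal entries $\gamma_g,\beta_g$, so the parameters $r_k$, and hence the whole hierarchy, are determined by $(c_2,c_{(4,2^{g-2})},c_{(2^g)})=(-2a,\,g(g-1)a_{(2^g)},\,0)$. Your proposed fix---evaluating in the DR family---is precisely the right tool, but once you invoke it you should identify with the DR hierarchy outright rather than try to extract structural properties of the abstract ALM relations that the conjecture does not promise.
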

\begin{proof}
Consider a tau-symmetric deformation of the Riemann hierarchy in the generalized standard form,
$$
\frac{\d u}{\d t_d}=\d_x\underbrace{\frac{\delta\oh_d}{\delta u}}_{P_d:=},\quad\oh_1=\int\left(\frac{u^3}{6}+\eps^2 a u_x^2+\sum_{k\ge 4}\eps^k\sum_{\lambda\in\cP_k'}b_\lambda u_\lambda\right)dx,\quad a\in\mbC^*,\quad b_\lambda\in\mbC.
$$
One can easily see that the flow $\frac{\d}{\d t_1}$ is in the normal form,
$$
P_1=\frac{u^2}{2}-2a\eps^2u_{xx}+\sum_{k\ge 4}\eps^k\sum_{\lambda\in\cP_k,\,\lambda_i\ge 2}c_\lambda u_\lambda,\quad c_\lambda\in\mbC,
$$
and that
\begin{gather}\label{eq:conjecture2 implies conjecture1-1}
c_{(2^g)}=0,\quad c_{(4,2^{g-2})}=g(g-1)b_{(2^g)},\quad g\ge 2.
\end{gather}
By Conjecture~\ref{ALM-conjecture}, all the coefficients $c_\lambda$ with odd $|\lambda|$ vanish, and therefore all the coefficients~$b_\lambda$ with odd $|\lambda|$ vanish. By Conjecture~\ref{ALM-conjecture} and Theorem~\ref{theorem:DR hierarchy for F-CohFT}, our hierarchy coincides with the DR-hierarchy corresponding to some F-CohFT $\{R.c^{\triv,-24a}_{g,n+1}\}$, $R(z)=\exp\lb\sum_{i\ge 1}r_i z^i\rb$, $r_i\in\mbC$, and therefore it is uniquely determined by the coefficients $a$, $c_{(2^g)}$, $c_{(4,2^{g-2})}$, $g\ge 2$. From~\eqref{eq:conjecture2 implies conjecture1-1} we conclude that the hierarchy is uniquely determined by the coefficients $a$, $b_{(2^g)}$, $g\ge 2$, which completes the proof of the theorem.
\end{proof}

\medskip

\subsection{On the Hamiltonian structure of the DR hierarchies of F-CohFTs}

\begin{theorem}
Let $G\in\mbC^*$ and $R(z)=\exp\lb\sum_{i\ge 1}r_i z^i\rb$, $r_i\in\mbC$. Then the DR hierarchy corresponding to the F-CohFT $\{R.c^{\triv,G}_{g,n+1}\}$ is a Hamiltonian deformation of the Riemann hierarchy if and only if $r_{2i}=0$ for all $i\ge 1$.
\end{theorem}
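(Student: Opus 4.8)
The plan is to reduce both implications to the previous proposition, which identifies $c_{(2^h)}=0$ for all $h\ge 2$ with $r_{2i}=0$ for all $i\ge 1$. For the ``if'' part one computes $R(z)R(-z)=\exp\big(2\sum_{j\ge 1}r_{2j}z^{2j}\big)$, so the hypothesis $r_{2i}=0$ for all $i$ is exactly the symplectic relation $R(z)R(-z)=1$. Hence $\{R.c^{\triv,G}_{g,n}\}$ is a genuine partial CohFT, the F-CohFT $\{R.c^{\triv,G}_{g,n+1}\}$ is its restriction to moduli spaces with at least one marked point, and by the discussion at the end of Section~\ref{section:CohFTs} the associated DR hierarchy is Hamiltonian.

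For the ``only if'' part, assume $\d_x P_d=K\frac{\delta\oh_d}{\delta u}$ with $K=\d_x+O(\eps)$ Poisson. First I would pin down the Hamiltonians: both $\oh_d$ and the canonical conserved quantities $\og_d$ of the DR hierarchy are conserved for the flow $\d_x P_1=uu_x+O(\eps)$ and share the $\eps=0$ part $\int\frac{u^{d+2}}{(d+2)!}\,dx$, so Lemma~\ref{lemma:uniqueness}(2) gives $\oh_d=\og_d$. Next I would show $\Coef_{\d_x^0}K=0$: since $\int\d_x P_d\,dx=0$, the functional $\int u\,dx$ Poisson-commutes with every $\og_d$, so its Hamiltonian flow $K\cdot 1=\Coef_{\d_x^0}K$ commutes with the whole hierarchy; being $O(\eps)$ it must vanish by Lemma~\ref{lemma:uniqueness}(1). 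Thus $K=\tilde K\circ\d_x$, and by Proposition~\ref{proposition:K to dx by normal} the operator $K$ can be transformed to $\d_x$ by a \emph{normal} Miura transformation. Finally, commutativity of the flows forces $\d_x\big(\frac{\d P_p}{\d t_q}-\frac{\d P_q}{\d t_p}\big)=0$, and since every $P_d$ vanishes at $u_*=0$ this yields the flux symmetry $\frac{\d P_p}{\d t_q}=\frac{\d P_q}{\d t_p}$. Together with $\oh_d=\og_d$ and $\Coef_{\d_x^0}K=0$, this shows the deformation is tau-symmetric with tau-structure $h_{d-1}=P_d$.

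Being tau-symmetric, the hierarchy can be brought into the DLYZ generalized standard form by Proposition~\ref{proposition:tau-symmetric to special} and Theorem~\ref{theorem:generalized standard form}, with Hamiltonian density $\frac{v^3}{6}+\eps^2 a\,v_x^2+\sum_{k\ge 4}\eps^k\sum_{\lambda\in\cP_k'}a_\lambda v_\lambda$. The crucial observation is that all these Miura transformations are normal, hence lie in the conservation-law group~\eqref{eq:Miura for conservation laws}; by uniqueness of the ALM normal form the coefficients $c_\lambda$, and in particular $c_{(2^h)}$, are therefore unchanged. In the generalized standard form the flux $\frac{\delta\oh_1}{\delta v}$ is itself in ALM normal form, since $\frac{\delta}{\delta v}v_\lambda$ is a sum of $v_\mu$ with all parts $\ge 2$ whenever all parts of $\lambda$ are $\ge 2$; moreover $\Coef_{v_{xx}^h}\frac{\delta}{\delta v}v_\lambda=0$ for $\lambda\in\cP_{2h}'$ by the computation in the previous proposition. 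Hence $c_{(2^h)}=0$ for all $h\ge 2$, and the previous proposition gives $r_{2i}=0$ for all $i\ge 1$.

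The step I expect to be the main obstacle is the bridge between the two normalization schemes: proving that $K=\tilde K\circ\d_x$, so that the passage to $K=\d_x$ and then to the tau-symmetric standard form stays inside the conservation-law Miura group and leaves the ALM invariants $c_{(2^h)}$ untouched. Once this is in place, the rest reduces to the uniqueness input of Lemma~\ref{lemma:uniqueness} and to the variational-derivative bookkeeping already isolated in the previous proposition.
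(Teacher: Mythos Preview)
Your ``if'' direction is correct and matches the paper. For the ``only if'' direction, your strategy is genuinely different from the paper's: you aim to show the deformation is tau-symmetric, pass to the generalized standard form via normal Miura transformations (which, being of the form $u\mapsto u+\partial_x(\partial_x P)$, lie in the conservation-law group and hence preserve the ALM normal form), and then read off $c_{(2^h)}=0$ to invoke the preceding proposition. The paper instead works directly with the Poisson operator: after establishing $\oh_d=\og_d$ it differentiates $\partial_x P_d=K\frac{\delta\og_d}{\delta u}$ by $u$ (using $\frac{\partial P_d}{\partial u}=P_{d-1}$ and $\frac{\partial\og_d}{\partial u}=\og_{d-1}$) to get $\frac{\partial K}{\partial u}=0$, then assumes $r_{2k}\ne 0$ for minimal $k$, computes the leading correction $\Delta\og_1=\frac{\alpha_{2k+2}}{3k+2}\int u u_{xx}^{k+1}\,dx$ explicitly, and shows via Lemmas~\ref{lemma:differentiating ulambda} and~\ref{lemma:unique density} that the resulting equation for the operator correction $\Delta K$ has no solution.

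Your approach has a real gap, and it is not the one you flag. The step you name as the obstacle --- that $K=\tilde K\circ\partial_x$ and that the relevant Miura transformations stay in the conservation-law group --- is in fact fine: you correctly deduce $\Coef_{\partial_x^0}K=0$ from Lemma~\ref{lemma:uniqueness}, and normal Miura transformations are visibly of conservation-law type. The genuine gap is the claim that the deformation is tau-symmetric with $h_{d-1}=P_d$. By definition the $h_d$ must be \emph{densities of the Hamiltonians}, i.e.\ one needs $\int P_d\,dx=\oh_{d-1}=\og_{d-1}$, and you do not establish this. For a partial CohFT one has $P_d=\frac{\delta\og_d}{\delta u}$ and hence $\int P_d\,dx=\frac{\partial\og_d}{\partial u}=\og_{d-1}$; but for a general F-CohFT $P_d\ne\frac{\delta\og_d}{\delta u}$, and no such identity is available. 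Equivalently, after bringing $K$ to $\partial_x$ by a normal Miura transformation you would need specialness ($\frac{\partial\oh_1'}{\partial u'}=\oh_0'$) to run the proof of Theorem~\ref{theorem:generalized standard form} and conclude the coefficients of $\oh_1'$ are constants; but the string-type relation $\frac{\partial\og_d}{\partial u}=\og_{d-1}$ in the original variable does not survive an arbitrary normal Miura transformation (the evolutionary operator $\frac{\partial}{\partial u}$ is not invariant). Without this input your passage to the generalized standard form, and hence the whole reduction to the previous proposition, does not go through.
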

\begin{proof}
Regarding the ``if'' part, as it was already mentioned above, if $r_{2i}=0$ for all $i\ge 1$, then the F-CohFT $\{R.c^{\triv,G}_{g,n+1}\}$ comes from a partial CohFT, and therefore the corresponding DR hierarchy is a Hamiltonian deformation of the Riemann hierarchy.

\medskip

Let us prove the ``only if'' part. Suppose that the DR hierarchy is a Hamiltonian deformation of the Riemann hierarchy, with Poisson operator $K=\d_x+O(\eps)$, $\deg K=1$:
$$
\d_x P_d=K\frac{\delta\oh_d}{\delta u},\quad \oh_d=\int\lb\frac{u^{d+2}}{(d+2)!}+O(\eps)\rb dx\in\hLambda_{u;0},\quad d\ge 0.
$$
We have to prove that $r_{2i}=0$ for all $i\ge 1$. Without loss of generality, we can assume that $G=1$. We have $D_{\d_x P_1}(\oh_d)=D_{K\frac{\delta\oh_1}{\delta u}}(\oh_d)=\{\oh_d,\oh_1\}_K=0$. On the other hand, for the conserved quantity $\og_d$ of the DR hierarchy, we have $D_{\d_x P_1}(\og_d)=0$. Using Lemma~\ref{lemma:uniqueness}, we conclude that $\oh_d=\og_d$. So we have
\begin{gather}\label{eq:proving that DR is not Hamiltonian}
\d_x P_d=K\frac{\delta\og_d}{\delta u},\quad d\ge 0.
\end{gather}

\medskip

Let us apply the operator $\frac{\d}{\d u}$ to both sides of~\eqref{eq:proving that DR is not Hamiltonian}. Using that 
$$
\frac{\d P_d}{\d u}=P_{d-1},\quad \frac{\d\og_d}{\d u}=\og_{d-1},\quad d\ge 1,
$$
we obtain 
$$
\frac{\d K}{\d u}\frac{\delta\og_d}{\delta u}=0,\quad d\ge 1,
$$
which implies that $\frac{\d K}{\d u}=0$. 

\medskip

Define $R'(z):=\exp\lb\sum_{i\ge 1}r_{2i-1}z^{2i-1}\rb$, and consider the partial CohFT $\{(R'.c^{\triv,G=1})_{g,n}\}$ and the associated DR hierarchy
$$
\frac{\d u}{\d t_d}=\d_x\underbrace{\frac{\delta\og_d'}{\delta u}}_{P'_d:=},\quad d\ge 0.
$$
Suppose that there exists $i\ge 1$ such that $r_{2i}\ne 0$, and let $k$ be the minimal such $i$. By Part~2 of Theorem~\ref{theorem:DR hierarchy for F-CohFT}, we have
$$
P_1=P_1'+\eps^{2k+2}\alpha_{2k+2}u_{xx}^{k+1}+O(\eps^{2k+4}).
$$
Using Lemma~\ref{lemma:uniqueness}, we immediately obtain that the local functionals $\og_d$ have the form
$$
\og_d=\og'_d+\eps^{2k+2}\Delta\og_d+O(\eps^{2k+4}),\quad \Delta\og_d\in\Lambda_{u;2k+2},\quad d\ge 0.
$$
We also see that 
$$
K\frac{\delta\og_d}{\delta u}=\d_x\frac{\delta\og_d}{\delta u}+O(\eps^{2k+2}),\quad d\ge 0,
$$
which implies that $K=\d_x+O(\eps^{2k+2})$. So we have
$$
K=\d_x+\eps^{2k+2}\Delta K+O(\eps^{2k+3}),\quad \Delta K\in\cDO_u,\quad\deg\Delta K=2k+3.
$$
Since $K$ is a Poisson operator, the operator $\Delta K$ has the form
$$
\Delta K=L(f)\circ\d_x+\d_x\circ L(f)^\dagger,\quad f\in\cA_{u;2k+2}.
$$

\medskip

Let us now compute $\Delta\og_1$. The property $D_{\d_x P_1}(\og_1)=0$ implies that
$$
D_{\d_x(P_1'+\eps^{2k+2}\alpha_{2k+2}u_{xx}^{k+1})}\lb\og_1'+\eps^{2k+2}\Delta\og_1\rb=O(\eps^{2k+4}).
$$
Taking the coefficient of $\eps^{2k+2}$, we obtain the equation
$$
\alpha_{2k+2}D_{\d_x(u_{xx}^{k+1})}\lb\int\frac{u^3}{6}dx\rb+D_{uu_x}\lb\Delta\og_1\rb=0.
$$
By Lemma~\ref{lemma:uniqueness}, if a solution $\Delta\og_1$ of this equation exists, then it is unique. An elementary computation shows that 
$$
\Delta\og_1=\frac{\alpha_{2k+2}}{3k+2}\int uu_{xx}^{k+1}dx
$$
satisfies this equation.

\medskip

We have the equality
$$
\d_x\lb P_1'+\eps^{2k+2}\alpha_{2k+2}u_{xx}^{k+1}\rb=(\d_x+\eps^{2k+2}\Delta K)\frac{\delta}{\delta u}\lb \og_1'+\frac{\alpha_{2k+2}}{3k+2}\eps^{2k+2}\int u u_{xx}^{k+1}dx\rb+O(\eps^{2k+3}).
$$
Taking the coefficient of $\eps^{2k+2}$, we obtain
\begin{gather}\label{eq:equation for DeltaK-1}
\alpha_{2k+2}\d_x(u_{xx}^{k+1})=\frac{\alpha_{2k+2}}{3k+2}\d_x\frac{\delta}{\delta u}\int u u_{xx}^{k+1}dx+\Delta K\frac{\delta}{\delta u}\int\frac{u^3}{6}dx.
\end{gather}
We also know that
\begin{align}
&\Delta K=L(f)\circ\d_x+\d_x\circ L(f)^\dagger,\quad f\in\cA_{u;2k+2},\label{eq:equation for DeltaK-2}\\
&\frac{\d\Delta K}{\d u}=0.\label{eq:equation for DeltaK-3}
\end{align}
Let us prove that the system of these three equations~\eqref{eq:equation for DeltaK-1}--\eqref{eq:equation for DeltaK-3} for $\Delta K$ doesn't have a solution. For this, without loss of generality we can assume that $\alpha_{2k+2}=1$.

\medskip

Differentiating equation~\eqref{eq:equation for DeltaK-1} twice by $u$ and using that $\frac{\d\Delta K}{\d u}=0$, we obtain $\Delta K(1)=0$. Using~\eqref{eq:equation for DeltaK-2}, we get $\d_x\frac{\delta f}{\delta u}=0$, and therefore $\frac{\delta f}{\delta u}=0$. We conclude that $f=\d_x q$, for some $q\in\cA_{u;2k+1}$, and 
$$
\Delta K=\d_x\circ\lb L(q)-L(q)^\dagger\rb\circ\d_x.
$$

\medskip

Since $\frac{\d\Delta K}{\d u}=0$, we have $L(\frac{\d q}{\d u})=L(\frac{\d q}{\d u})^\dagger$, which implies that $\frac{\d q}{\d u}=\frac{\delta r}{\delta u}$ for some $r\in\cA_{u;2k+1}$. Choose $r'\in\cA_{u;2k+1}$ such that $\frac{\d r'}{\d u}=r$. Note that by changing $q\mapsto q-\frac{\delta r'}{\delta u}$ we don't change~$\Delta K$, but, on the other hand, we change $\frac{\d q}{\d u}\mapsto\frac{\d q}{\d u}-\frac{\delta r}{\delta u}=0$. So, without loss of generality, we can assume now that $\frac{\d q}{\d u}=0$.

\medskip

So we can now rewrite equation~\eqref{eq:equation for DeltaK-1} as follows:
$$
\d_x(u_{xx}^{k+1})=\frac{1}{3k+2}\d_x\frac{\delta}{\delta u}\int u u_{xx}^{k+1}dx+\lb\d_x\circ\lb L(q)-L(q)^\dagger\rb\circ\d_x\rb\lb\frac{u^2}{2}\rb,
$$
or equivalently
\begin{gather*}
u_{xx}^{k+1}=\frac{1}{3k+2}\frac{\delta}{\delta u}\int u u_{xx}^{k+1}dx+\lb L(q)-L(q)^\dagger\rb(uu_x),\quad q\in\cA_{u;2k+1},\quad\frac{\d q}{\d u}=0.
\end{gather*}
Taking the integral of both sides, we get
$$
\int u_{xx}^{k+1}dx=\frac{1}{3k+2}\int u_{xx}^{k+1} dx+\int L(q)(uu_x)dx,
$$
or equivalently
$$
\frac{3k+1}{3k+2}\int u_{xx}^{k+1}dx=\int D_{uu_x}(q) dx,
$$
but this equation doesn't have a solution, because of Lemmas~\ref{lemma:differentiating ulambda} and~\ref{lemma:unique density}. This contradiction proves that $r_{2i}=0$ for all $i\ge 1$.
\end{proof}

\medskip

\noindent{\bf Conclusions and future directions.} At least two natural questions arise from this work, which certainly deserve further investigation. The first one is how to prove the full DLYZ Conjecture \ref{conjecture:DLYZ}, or even the ALM Conjecture \ref{ALM-conjecture}. The second one is whether we can find a moduli space geometric origin for the functional parameters appearing in the ALM conjecture~\ref{ALM-conjecture}. Recall, indeed, that Theorem \ref{theorem:DR hierarchy for F-CohFT} only found such geometric origin for when the ALM functional parameters are constant. At least for the second question some ideas can be put forth on how to possibly enlarge the range of parameters that can be reached by the DR hierarchy construction, or slight generalizations thereof. Firstly, the DR construction works for the CohFTs and F-CohFTs without flat unit, a generalization which corresponds to removing the requirement that $\pi^* c_{g,n} = c_{g,n+1}$ in the definitions presented in Section \ref{section:CohFTs}. Rank $1$ CohFTs without flat unit are actually classified in \cite{Tel12} and correspond to exponentials in an infinite linear combinations of kappa classes. The corresponding DR hierarchy can be transformed into a deformation of the Riemann hierarchy by a simple change of coordinates which provide new deformations, although hardly as many as are allowed by all values of the functional parameters of the ALM conjecture. As for F-CohFTs (them too, possibly without unit), as remarked already in \cite{ABLR23}, one might argue that a (further) natural generalization consists in defining them on the partial compactification $\cM_{g,n}^{\mathrm{ct}}$ of $\cM_{g,n}$ given by curves of compact type, i.e. stable curves whose dual graph is a tree, instead of the full Deligne--Mumford compactification~$\oM_{g,n}$. Notice, indeed, that the axioms in the definition of F-CohFT of Section \ref{section:CohFTs} make sense on~$\cM_{g,n}^{\mathrm{ct}}$ as well, and that the DR hierarchy construction is anyway insensitive to the part of an F-CohFT that is supported on the complement of $\cM_{g,n}^{\mathrm{ct}}$. More in general, an exploration of the laxest set of axioms for families of classes on $\cM_{g,n}^{\mathrm{ct}}$ compatible with integrability might be in order.

\medskip


\begin{thebibliography}{BDGR18}

\bibitem[ABLR21]{ABLR21} A. Arsie, A. Buryak, P. Lorenzoni, P. Rossi. {\it Flat F-manifolds, F-CohFTs, and integrable hierarchies}. Communications in Mathematical Physics 388 (2021), 291--328.

\smallskip

\bibitem[ABLR23]{ABLR23} A. Arsie, A. Buryak, P. Lorenzoni, P. Rossi. {\it Semisimple flat F-manifolds in higher genus}. Communications in Mathematical Physics~397 (2023), 141--197.

\smallskip

\bibitem[ALM15a]{ALM15a} A. Arsie, P. Lorenzoni, A. Moro. {\it Integrable viscous conservation laws}. Nonlinearity~28 (2015), no.~6, 1859--1895.

\smallskip

\bibitem[ALM15b]{ALM15b} A. Arsie, P. Lorenzoni, A. Moro. {\it On integrable conservation laws}. Proceedings of the Royal Society~A~471 (2015), no.~2173, 20140124.

\smallskip

\bibitem[Bur15a]{Bur15-DR-hierarchy} A. Buryak. {\it Double ramification cycles and integrable hierarchies}. Communications in Mathematical Physics~336 (2015), no.~3, 1085--1107. 

\smallskip

\bibitem[Bur15b]{Bur15} A. Buryak. {\it Dubrovin--Zhang hierarchy for the Hodge integrals}. Communications in Number Theory and Physics~9 (2015), no.~2, 239--271.

\smallskip

\bibitem[Bur24]{Bur24} A. Yu. Buryak. {\it DR-hierarchies: from the moduli spaces of curves to integrable systems}. Proceedings of the Steklov Institute of Mathematics~325 (2024), 21--59.

\smallskip

\bibitem[BDGR18]{BDGR18} A. Buryak, B. Dubrovin, J. Guere, P. Rossi. {\it Tau-structure for the double ramification hierarchies}. Communications in Mathematical Physics 363 (2018), no.~1, 191--260.

\smallskip

\bibitem[BDGR20]{BDGR20} A. Buryak, B. Dubrovin, J. Guere, P. Rossi. {\it Integrable systems of double ramification type}. International Mathematics Research Notices 2020 (2020), no.~24, 10381--10446.

\smallskip

\bibitem[BGR19]{BGR19} A. Buryak, J. Guere, P. Rossi. {\it DR/DZ equivalence conjecture and tautological relations}. Geometry~\& Topology 23 (2019), no.~7, 3537--3600.

\smallskip

\bibitem[BR21]{BR18} A. Buryak, P. Rossi. {\it Extended $r$-spin theory in all genera and the discrete KdV hierarchy}. Advances in Mathematics 386 (2021), paper number 107794.

\smallskip

\bibitem[BSSZ15]{BSSZ15} A. Buryak, S. Shadrin, L. Spitz, D. Zvonkine. {\it Integrals of $\psi$-classes over double ramification cycles}. American Journal of Mathematics 137 (2015), no. 3, 699-737.

\smallskip

\bibitem[DMS05]{DMS05} L. Degiovanni, F. Magri, V. Sciacca.  {\it On deformation of Poisson manifolds of hydrodynamic type}. Communications in Mathematical Physics~253 (2005), no.~1, 1--24.

\smallskip

\bibitem[Dor78]{Dor78} I. Ya. Dorfman. {\it Formal variational calculus in the algebra of smooth cylindrical functions}. Functional Analysis and Its Applications~12 (1978), 101--107. 

\smallskip

\bibitem[Dub10]{Dub10} B. Dubrovin. {\it Hamiltonian PDEs: deformations, integrability, solutions}. Journal of Physics. A. Mathematical and Theoretical~43 (2010), no.~43, 434002.

\smallskip

\bibitem[DLYZ16]{DLYZ16} B. A. Dubrovin, S.-Q. Liu, D. Yang, Y. Zhang. {\it Hodge integrals and tau-symmetric integrable hierarchies of Hamiltonian evolutionary PDEs}. Advances in Mathematics 293 (2016), 382--435.

\smallskip

\bibitem[Get02]{Get02} E. Getzler. {\it A Darboux theorem for Hamiltonian operators in the formal calculus of variations}. Duke Mathematical Journal~111 (2002), no.~3, 535--560.

\smallskip

\bibitem[KM94]{KM94} M. Kontsevich, Yu. Manin. {\it Gromov--Witten classes, quantum cohomology, and enumerative geometry}. Communications in Mathematical Physics~164 (1994), no.~3, 525--562.

\smallskip

\bibitem[LRZ15]{LRZ15} S.-Q. Liu, Y. Ruan, Y. Zhang. {\it BCFG Drinfeld--Sokolov hierarchies and FJRW--Theory}. Inventiones Mathematicae 201 (2015), no. 2, 711--772.

\smallskip

\bibitem[LZ06]{LZ06} S.-Q. Liu, Y. Zhang. {\it On quasi-triviality and integrability of a class of scalar evolutionary PDEs}. Journal of Geometry and Physics~57 (2006), no.~1, 101--119.

\smallskip

\bibitem[LZ11]{LZ11} S.-Q. Liu, Y. Zhang. {\it Jacobi structures of evolutionary partial differential equations}. Advances in Mathematics~227 (2011), 73--130.

\smallskip

\bibitem[FP00a]{FP00-Hodge-and-Gromov} C.~Faber, R.~Pandharipande. {\it Hodge integrals and Gromov--Witten theory}. Inventiones Mathematicae~139 (2000), 173-199.

\smallskip

\bibitem[FP00b]{FP00} C. Faber, R. Pandharipande. {\it Logarithmic series and Hodge integrals in the tautological ring}. With an appendix by Don Zagier. Michigan Mathematical Journal 48 (2000), 215--252.

\smallskip

\bibitem[FP03]{FP03} C. Faber, R. Pandharipande. {\it Hodge integrals, partition matrices, and the $\lambda_g$ conjecture}. Annals of Mathematics~157 (2003), no.~1, 97--124.

\smallskip

\bibitem[PPZ15]{PPZ15} R. Pandharipande, A. Pixton, D. Zvonkine. {\it Relations on $\oM_{g,n}$ via $3$-spin structures}. Journal of the American Mathematical Society~28 (2015), no.~1, 279--309.

\smallskip

\bibitem[Ros17]{Ros17} P. Rossi. {\it Integrability, quantization and moduli spaces of curves}. SIGMA 13 (2017), article number~060.

\smallskip

\bibitem[Tel12]{Tel12} C. Teleman. {\it The structure of 2D semi-simple field theories}. Inventiones Mathematicae~188 (2012), no. 3, 525--588.

\end{thebibliography}
\end{document}